\DeclareMathOperator\atanh{arctanh}
\def\be{\begin{equation}}
\def\ee{\end{equation}}
\theoremstyle{definition} \newtheorem{defi}{Definition}
\theoremstyle{plain} \newtheorem{theo}{Theorem}
\theoremstyle{plain} \newtheorem{lem}{Lemma}
\theoremstyle{plain} \newtheorem{prop}{Proposition}
\theoremstyle{plain} \newtheorem{cor}{Corollary}
\theoremstyle{remark} 
\newcommand{\arctanh}[1]{\mathrm{arctanh}#1}
\def\<{\langle}
\def\>{\rangle}
\newcommand{\Ham}{\mathcal{H}}
\newcommand{\B}{\beta}
\def\xr{{\underline{x}_r}}
\def\sr{{\underline{s}_r}}
\def\xa{{\underline{x}_a}}
\def\xz{{\underline{x}_z}}
\def\xg{{\underline{x}_\gamma}}
\def\allx{{\underline{x}}}
\def\alls{{\underline{s}}}
\def\sfour{{\underline{s}_P}}
\def\rinregs{{r \in R}}
\def\Ac{A^o}
\def\Acrc{ A^o_{r_0}}
\def\Arc{ A_{r_0}}
\def\Bra{B_{r_0,\alpha}}
\def\Braco{\bar{B}_{r_0,\alpha}^o}
\def\ptc{P-t-C}
\def\mra{m_{r_0\to\alpha}}
\def\lub{\mbox{lub}}
\def\hatu{\hat u}
\def\hatup{\hat u_P}
\def\hatU{\hat U}
\def\matstab{\mathbb{ K}}
\def\fcvm{F_{\text{\tiny{CVM}}}}
\begin{document}
\title{Gauge-free cluster variational method by maximal messages and moment matching}

\author{Eduardo Dom\'inguez}
\affiliation{Department of Theoretical Physics, Physics Faculty,
  University of Havana, La Habana, CP 10400, Cuba. }
\affiliation{``Henri Poincar\'e'' group of Complex Systems, University of Havana, Cuba. }
  
\author{Alejandro Lage-Castellanos}
\affiliation{Department of Theoretical Physics, Physics Faculty,
  University of Havana, La Habana, CP 10400, Cuba. }
\affiliation{``Henri Poincar\'e'' group of Complex Systems, University of Havana, Cuba. }
\affiliation{CNRS, Laboratoire de Physique Statistique, {\'E}cole Normale Supérieure, 75005, Paris.}

  \author{ Roberto Mulet}
\affiliation{Department of Theoretical Physics, Physics Faculty,
  University of Havana, La Habana, CP 10400, Cuba. }
\affiliation{``Henri Poincar\'e'' group of Complex Systems, University of Havana, Cuba. }

\author{Federico Ricci-Tersenghi} \affiliation{Dipartimento di Fisica,
  INFN -- Sezione di Roma 1 and CNR -- Nanotec, unit\`a di Roma,\\
  Universit\`{a} La Sapienza, P.le A. Moro 5, 00185 Roma, Italy}

\date{\today}

\begin{abstract}

We present a new implementation of the Cluster Variational Method (CVM) as a message passing algorithm.
The kind of message passing algorithms used for CVM, usually named Generalized Belief Propagation, are a generalization of the Belief Propagation algorithm in the same way that CVM is a generalization of the Bethe approximation for estimating the partition function.
However, the connection between fixed points of GBP and the extremal points of the CVM free-energy is usually not a one-to-one correspondence, because of the existence of a gauge transformation involving the GBP messages.

Our contribution is twofold. Firstly we propose a new way of defining messages (fields) in a generic CVM approximation, such that messages arrive on a given region from all its ancestors, and not only from its direct parents, as in the standard Parent-to-Child GBP. We call this approach {\it maximal messages}.
Secondly we focus on the case of binary variables, re-interpreting the messages as fields enforcing the consistency between the moments of the local (marginal) probability distributions.
We provide a precise rule to enforce all consistencies, avoiding any redundancy, that would otherwise lead to a gauge transformation on the messages.
This {\it moment matching} method is gauge free, {\it i.e.} it guarantees that the resulting GBP is not gauge invariant.

We apply our maximal messages and moment matching GBP to obtain an analytical expression for the critical temperature of the Ising model in general dimensions at the level of plaquette-CVM. The values obtained outperform Bethe estimates, and are comparable with loop corrected Belief Propagation equations. 
The method allows for a straightforward generalization to disordered systems.

%A new way of implementing Cluster Variational Method as a generalized message passing is presented, with the virtue of being gauge invariant.  It is applied in the calculation of the plaquette-CVM critical temperature in the Ising model in general dimension, obtaining an analytic formula that improves over Bethe, and is as valid as the ones obtained with loop-corrected belief propagation. The method allows for a simpler treatment of the disordered case, like Edwards Anderson model, though the results on it show that there is an over simplification, and some features of the previous approaches to replica CVM might not be expendable.
\end{abstract}
\pacs{}

\maketitle
%\tableofcontents

\section{Introduction}

The Ising ferromagnet is one of the most studied and celebrated models in Statistical Physics. Although it lacks a proper analytical solution in three dimensions, it is globally well understood \cite{Huang}. However, the addition of disorder to this model generates a more complex scenario. Roughly speaking, the low temperature phase of the disordered model is not composed any more by two equivalent ordered phases as in the pure ferromagnetic model, but by many disordered phases with a complex structure. 
Techniques like the replica trick \cite{MPV} and the cavity method \cite{MP1,MP2} opened the door to the analytical treatment of the disordered variants of this and similar models in fully connected or in locally tree-like random graphs \cite{MezMonbook}.

However, finite dimensional systems remain a challenging problem regarding the analytical solutions. Only recently \cite{yedidia,tommaso_CVM,average_mulet,zhou_11,zhou_12,eduardoRFIM} has been realized that a proper generalization of the Bethe approximation, known with the name of Cluster Variational Method (CVM), could be a good starting point for a systematic treatment of these kind of disordered problems. The main task is to translate the (approximate) free energy saddle point conditions in a set of message passing equations, that can be solved efficiently even on large systems.

The interest in this kind of approximations is not only theoretical, but it comes also from many applications. For example, in image processing \cite{tanakaCVM1995,tanakaCVM,tanaka_replica_cvm,gouillart_discrete_tom}, 
it is important to improve the quality of the reconstruction algorithms, and message passing derived from CVM approximations has proved to be a good candidate in this direction \cite{tanaka_morita_CVM_image95}. Error correction and LDPC codes is another example of applications where GBP has been studied \cite{sibel12}, including the idea of fixing the gauge \cite{Pakzad2005Kikuchi}.

% the quality of the  approximation in finite dimensional systems over Bethe approximation

%Finite dimensional spin models remain a challenging problem in statistical physics \cite{Baxter}.  In particular if they are defined on disordered lattices \cite{debate on EA3D} or in random fields  \cite{debate on RFIM}. Rougly speaking, we already have techniques like the replica trick \cite{MPV}, the cavity method \cite{MP1,MP2} and message passing equations \cite{MezMonbook} to deal with disordered models defined on fully connected or tree-like graphs. Unfortunately, similar closed solutions to deal with models defined in finite dimensions have being elusive for years.

%Techniques like the  replica method \cite{MPV}, the cavity method \cite{MP1,MP2} and message passing equations \cite{MezMonbook} are well established in the literature and have being extremely sucessful in the study of models defined on fully connected or tree-like graphs, but unfortunately similar solutions to deal finite dimensional systems has being elusive. 

 %However, inspired by possible applications like in image processing  \cite{tanakaCVM1995,tanakaCVM,tanaka_morita_CVM_image95,tanaka_replica_cvm}, it is still relevant to improve  the quality of the  approximation in finite dimensional systems over Bethe approximation. Acknowledging its limits, it is also of theoretical relevance to extend and test the replica method to region graph approximations to the free energy, beyond Bethe \cite{tanaka_replica_cvm,average_mulet,eduardoRFIM}.

Most of previous works on cluster variational method and replica method, relied on the so called Parent-to-Child message passing \cite{yedidia},
which consists of an extension of the belief propagation for the Bethe approximation to more involved region graph approximations of the free 
energy. It has been shown that the Parent-to-Child message passing is redundant \cite{Pakzad2005Kikuchi,zhou_redundant,GBPGF}, since it introduces more 
``cavity'' fields (messages) than actually needed, producing a sort of gauge invariance in the solution. According to our experience, 
this invariance is not a big problem in the implementation of message  passing algorithms on a given finite dimensional instance, but
it certainly is a waist of computational resources, since more parameters need to be implemented.
In \cite{Aurelien16}, however, authors reports the gauge invariance as causing convergence problems. In any case this gauge invariance may obscure 
the connection between the average case prediction of the CVM equations derived for a disordered model and the solution of message passing 
equations in single instances of the model \cite{average_mulet}. To alleviate this problem we propose a general procedure to generate 
Gauge Free Generalized Belief Propagation (GFGBP) algorithm starting from a Cluster Variational Method approximation.

The procedure developed consists of the following steps:
\begin{enumerate}
\item definition of {\bf maximal messages}, in place of parent-to-child  messages 
\item definition of {\bf moment matching fields} in place of Lagrange multipliers to ensure beliefs consistency.
\end{enumerate}
The first is just another possible choice of messages that is quite general. 
%It has the virtue of being hierarchical {\it per se}, in the sense that removing some regions in the region graph results in removing the corresponding messages in the message passing equations, and therefore coarser  approximations are obtained from more refined ones by setting some messages to uniform.
The latter is a change of perspective in the interpretation of messages as Lagrange multipliers forcing marginalizations, to fields forcing consistency 
of moments in the beliefs distributions. This allows a systematic construction of gauge free message passing for any model with binary variables.

In \cite{zhou_redundant, Pakzad2005Kikuchi} authors developed a way to remove the redundancy in the GBP equations by removing the redundant messages. 
Our approach differs from theirs in that they keep with the Parent-to-Child approach of \cite{yedidia} and propose to fix the gauge by removing 
some messages completely from the belief expression of given regions in order to avoid loops in the region graph representation. We, instead, propose a larger
set of messages, but with properly reduced degrees of freedom.

%While we feel that this approach is very systematic and natural, there is no particular gain (as far as we know) to use it over standard \ptc\  
%in inference applications. However, when computing average properties of the approximations, a task physicist are usually interested at, accounting or 
%neglecting the gauge in the messages might  produce different results. 
%The proposed approach substantially simplifies the average case calculation with respect to those using \ptc. 

We will apply the gauge free approach to the computation of critical temperatures in the plaquette-CVM approximation in Ising model in 
general dimensions, obtaining analytical expressions that improve over Bethe. The high dimension expansion of the critical temperature 
is correct until the third order term, as is the loop calculus of Ref.~\cite{loop_corrected_tom}. We also test the procedure in single 
instance implementation of message passing in Ising model. The more complicated (and interesting) disordered models, are left for future work.

%The disordered case treatment is necessarily more complicated. We resort to the Replica Cluster Variational Method of  \cite{tommaso_CVM,average_mulet} to
%compute critical temperatures, which happens to be simpler than the approach in  \cite{average_mulet}, but maybe too simple. %For the 2D and 3D Edwards-Anderson model, the simplest assumption of factorized fields distributions, is only  able to locate transitions in the non too disordred part, namely, para-ferro transitions. A rationale for this is given by observing that fields obtained in single instances are far from being independent.

The paper is organized as follows. Sec.~\ref{sec:CVM} introduces CVM and message passing algorithm in general terms, while Sec.~\ref{sec:maximalmessages} explains the maximal messages (MM) and the moment matching (MM) approaches; finally in Sec.~\ref{sec:isingd} we apply the MM-MM CVM (or 4M-CVM in short) algorithm to the calculation of the critical temperature in Ising models of general dimensions at the plaquette level.
For the sake of readability, we defer to the appendices the technical proofs.

\section{Cluster Variational Method} \label{sec:CVM}

The kind of problems we are dealing with are those statistical mechanics problems that require the computation of the properties of a large set of binary variables $x_i \in\{1,-1\}$, whose joint probability distribution
\begin{equation}
P(\allx) = \frac 1 Z \exp(-\beta \Ham(\allx)) \label{eq:Px_boltzmann}
\end{equation}
depends on a Hamiltonian $\Ham(\allx)$ that can be written as the sum of local terms
\[
\Ham(\allx) = \sum_a E_a(\xa)\,,
\]
where every interaction ``$a$'' with energy $E_a(\xa)$ involves a small subset of variables $\allx_a$.
This also includes the case of Bayesian networks, and therefore of many interesting inference problems. 

Computations of the statistical properties of 
each variable $x_i$ or groups of them, face the numerical difficulty of tracing over an exponential number of configurations when marginalizing over the 
remaining variables, and in general approximations are required. In the case of mean field, Bethe, and region graph approximations (see  \cite{idiosync_yedidia}), 
the underlying idea is to factorize the full probability distribution $P(\allx)$ into many smaller distributions containing a non 
extensive number of variables that we will refer to as regions.

\begin{figure*}[!htb]
  \includegraphics[width=\textwidth]{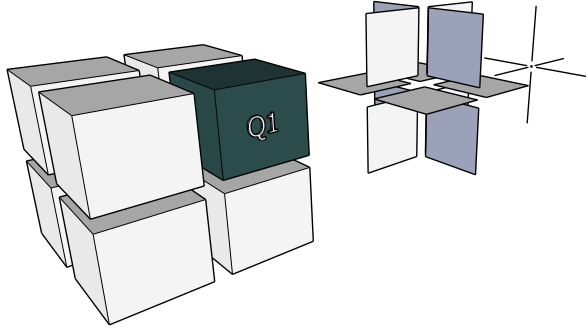}
\caption{ Example of regions surrounding the central spin $s_1$ in the cube approximation for the 3D square lattice model. Left: the 8 cubic 
regions $Q_1,\ldots,Q_8$. Center: the 12 faces (plaquettes) $P_1,\ldots,P_{12}$ shared by the maximal cubic regions. Right: the 6 vertex shared by 
the plaquettes and the central spin. The cube $Q_1$ is highlighted for later use. \label{fig:cube3D}}
\end{figure*}

The CVM \cite{kikuchi,yedidia} starts from a set of maximal regions $R_0$ (basic clusters), where no region is subset of another, and constructs a 
hierarchy of regions over which the approximation is defined. We will require that each degree of freedom $x_i$ and also all interactions $E_a(\xa)$ 
are present in at least one of these regions. Then we extend $R_0$ with the closure under the intersection operation as explained next.

From $R_0$, we define recursively the set of intersections $R_k$ as
\[ R_k = \{ r = r_{k-1}  \cap r_{k-1}' | r_{k-1},r_{k-1}' \in R_{k-1}\}\]
The whole group of regions is $R = R_0\cup R_1\cup R_2 \ldots$. Actually, in the CVM construction, the same regions might appear
more than once, and in different levels of intersections. Regardless this degeneracy, the relevant set is  $R$, the collection of
all regions obtained. Of utmost importance for later proves is that $R$ is a 
closed set under intersections and a partially ordered set, in which the subset relation defines the partial order, and 
$R_0$ is the set of maximal regions.

Since the system Hamiltonian is given by sums of local interactions between subsets of variables, 
we will consider that every time that the set of variables $\xa$ are part of a given region $\xa \subset r$, then the interaction ``$a$'' itself is 
part of it, allowing us to define the energy of the region as:
\[ E_r(\xr)  = \sum_{a\in r} E_a(\xa)\]
Since all interactions are at least part of one maximal region $r_0\in R_0$, we can write the Hamiltonian of the system as a sum over regions:
\begin{equation}
\Ham = \sum_{\rinregs} c_r E_r(\xr) \label{eq:E_r} 
\end{equation}
where the counting numbers $c_r$ guarantee that every interaction is counted exactly once \cite{yedidia}:
\begin{equation}
c_\alpha = 1-\sum_{r\in A_\alpha} c_r. \label{eq:c_r}
\end{equation}
The set $A_\alpha$ stand for the set of all ancestors of region $\alpha$, this is all super-regions of region  $\alpha$
\[A_\alpha = \{r\in R | \alpha \subset r \}.\]

Before going further in detail, let us visualize an example of the regions generated by the CVM construction. 
Consider a 3-dimensional spin model, with spins living in the nodes of a 3D-square lattice. In the cubic approximation, 
maximal regions are taken as the basic cubic cell of the lattice, with all its eight degrees of freedom at the cube's vertex. 
As a representative part of the full system, diagrams in figure \ref{fig:cube3D}
show all the regions containing the central spin $s_1$ (depicted as the central point in the rightmost diagram). Notice that the intersections of
the cubic regions in the leftmost diagram produce the square plaquette regions in the center diagram, with a spin at every angle of the squares. And the intersection of those plaquettes 
result in the rod (edges with two spins) regions in the rightmost diagram, which intersect only in the central spin.

\subsection{Variational approach and message passing}
Next we reproduce the approach by Zhou {\it et al}  \cite{zhou_redundant} on the derivation of message passing equations, instead of that of 
Yedidia  \cite{yedidia}. We prefer the former because it is
somehow more direct in the choice of the belief equations, saving the time of passing through Lagrange multipliers.

It starts by noting that, in accordance with eq. (\ref{eq:E_r}), the exact partition function of a system 
can be written as:
\begin{eqnarray*}
 Z(\B) &\equiv& \sum_\allx \exp(-\B \Ham(\allx)) \\
&=&  \sum_\allx \prod_{\rinregs }  \left[\exp(-\B  E_r(\xr)) \right]^{c_r} 
\end{eqnarray*}
 
A set of non zero test functions $\{m_{z}(\xz)\}$ can be multiplied and divided in the right hand side, 
such that they cancel out. We will call these test functions, messages.
Let us define $\partial z \subset R$ the set of regions in which the message $m_{z}(\xz)$ appears, 
and let $D_r $ be the set of messages entering region $r$, then 
\begin{eqnarray*} 
 Z(\B) &=& \sum_\allx \prod_{\rinregs }  \left[\exp(-\B  E_r(\xr))  \prod_{z \in D_r}  m_z(\xz) \right]^{c_r} 
\end{eqnarray*}
will still be the same partition function (independently of the values of the messages) if:
\begin{equation} 
\forall z\; \quad \sum_{r \in \partial z} c_r =0 \:\: .
\label{eq:mc_r}
\end{equation}
%We are forcing messages to appear multiplicatively (with exponent 1) in the $Z(\B)$ equation. This is not mandatory and can be lifted, at the prize of more complicated expressions. However, we will attach to this requirement.

We can write an approximation to the free energy of the model in terms of the local beliefs
\begin{equation}
 b(\xr) = \frac 1 {z_r} \exp(-\B  E_r(\xr))  \prod_{z \in D_r}  m_z(\xz) \label{eq:br}
\end{equation}
with local partition functions
\[ z_r = \sum_{\xr} \exp(-\B  E_r(\xr))  \prod_{z \in D_r}  m_z(\xz).
\]
The free energy of the model $F = -kT \log Z(\B)$ can be rewritten as:
\begin{eqnarray*}
 F &=& -k T \sum_{\rinregs} c_r \log z_r -k T \log \left[ \sum_\allx \prod_{\rinregs } b_r(\xr)^{c_r}  \right]  \\
 &=& F_{R} + \Delta F
\end{eqnarray*}
This expression is still exact (independently of the value of the message functions). We will regard the first term 
$F_R[\{m\}]$ as a variational approximate to the real free energy. The rationale for this goes as follows. It can be shown (and will be)
that the minimization of the first term is equivalent to imposing local consistency between marginals of the beliefs functions (those appearing
in the second term). Once the beliefs are locally consistent it can be shown that the 
correct joint probability distribution of the model  $P(\underline x)$ can be written in a factorized form
as $\prod_{\rinregs } b_r(\xr)^{c_r}$ as far as the underlaying graph is a tree, therefore $\Delta F = - K T \log 1 = 0$.
This proves, {\it en passant}, that the approximation is exact for the case of tree topologies.
A rigorous justification of the approximation is absent, but in \cite{zhou_12} authors relate $\Delta F$
to the sum of correction contributions in the loop expansion of the free energy.
In the general case, i.e. loopy graphs, working with locally consistent beliefs that follow from the extremization of the first term, does not guarantees 
that the factorized measure $\prod_{\rinregs } b_r(\xr)^{c_r}$ is properly normalized, therefore at the fixed point $\Delta F \neq 0$ generally. 
Nevertheless, in all the situations where is meaningful to use message passing algorithms, we expect the corrections due to loops to be small, and
for this very reason, also $\Delta F \ll F_R$.
%when messages passing algorithms do converge, we expect corrections due to loops to be small, and therefore $\Delta F \ll F_R$

% If the 
% second term  $\Delta F$ (hard to compute due to the sum over $\allx$) is defined negative (as a normal free energy), then minimizing
% $F_R[\{m\}]$ will push $\Delta F$ towards zero because their sum is fixed to the real $F$, and consequently will take 
% $F_R[\{m\}]$ near the real $F$.
% However, there is no general proof that the second term has a single sign for all messages, 
% and consequently we can not guarantee that the extreme values of $F_R[\{m\}]$ are a bound for the real free energy, except 
% for very special cases, like Bethe approximation on trees. It can be also worth noting that in the case of Bethe approximation
% on trees the actual Boltzmann distributions allows for a factorization of the form $\prod_{\rinregs } b_r(\xr)^{c_r}$, meaning
% that the trace inside the logarithm sums to 1, and therefore $\Delta F=0$.

As a consequence of the variational treatment, we now need to solve the set of equations 
\begin{equation} \label{eq:dF_dm} 
\frac{\partial F_R[\{m\}]}{\partial m_{z}} = 0 \quad \forall z \;.
\end{equation}
The precise form of the resulting equations, and what exactly are they enforcing depends on the 
choice made for the messages and how do they appear in the belief equations. Next we explain one possible choice that we retain as the 
natural one and  we will call maximal message passing.

\section{Gauge-free 4M-CVM: maximal messages and moment matching}
\label{sec:maximalmessages}

Previously \cite{yedidia,GBPGF,zhou_redundant,Pakzad2005Kikuchi,Aurelien16} the set of messages have been defined using the so called Parent-to-Child (\ptc) approach.
This means that messages $m_{\alpha \to \gamma}(\xg) $ are indexed by two regions labels, the father one $\alpha$ and the child one $\gamma$. 
We will say that a region  $\alpha$ is father of $\gamma$, if $\alpha \supset \gamma$ and no region in $R$ is a subset of $\alpha$ and a 
superset of $\gamma$. In \ptc\   no messages are considered from grandparents or higher ancestors.

While this approach is very systematic, it has the problem of introducing too many degrees of freedom in the test functions. 
As already mentioned this may not have major consequences (besides efficiency) in physical observables measured on a given instance, 
but introduces a gauge invariance that might be problematic in the comparison with the typical behavior of message passing equations in the average case scenario  \cite{GBPGF,average_mulet}.
The reason is that in population dynamics one assumes messages arriving on a given region from different ancestors to be mostly uncorrelated: there are situations where this approximation is physically valid (e.g. when correlations are not too strong and regions are large enough), however the gauge invariance implies messages can freely change under the gauge transformation and this introduces undesirable correlations among the messages. For this reason a scheme free from the gauge invariance is very welcome.

%For instance, the  P-t-C  message passing in 2D Ising ferromagnet, using plaquette approximation, ,,,, wrong Tc (CHECK)

We propose a top-down approach, that we call \emph{maximal message} passing, in which messages to region $r$ flow from all its ancestors 
$p\supset r$. %In appendix \ref{ap:top-down} we show also a top-down approach, in which messages start only at maximal regions. 
We prefer the maximal messages, among other possibilities because it will allow us later to construct a gauge free system of message passing equations.

%The precise rules for the maximal messages are the following:
\begin{defi}{Maximal messages}
are defined by the set of message functions used, and by how these functions participate in each regions belief, as follows:
\begin{itemize}
\item every region $r\in R$  receives messages from all its ancestors $p \in A_r$. 
Messages are functions of the degrees of freedom in $r$: $m_{p \to r} (\xr) $,
 \item message $m_{p \to \gamma} (\underline{x}_\gamma) $ will appear in the region partition function $z_{r}$ 
 (i.e. equation  (\ref{eq:br})) of region $r$, iff  $ r \cap p = \gamma $. 
 \end{itemize}
\end{defi}
The first point asserts that there are as many message functions as pairs of comparable regions in the CVM construction, and therefore the 
messages are labeled by these two regions as $m_{p\to r}(\xr)$ where $r\subset p$.
The term {\it maximal messages} comes since there are messages to every region from all its ancestors, not only the frist direct parents as 
in parent to child. The second point defines $D_r$, the set of messages  $m_{p \to \gamma} (\underline{x}_\gamma)$ that appear 
multiplicatively in  the belief expression (or the partition function) of a certain region $r$, as
\[D_r = \{ p,\gamma \in R | p \nsubseteq r,  p \cap r =\gamma \neq \varnothing \}. 
\]
Put together, we have an expression for the beliefs at any region given by:
\begin{equation}
 b(\xr) = \frac 1 {z_r} e^{-\B  E_r(\xr)}  \prod_{p,\gamma \in D_ r}  m_{p \to \gamma}(\xg) \label{eq:brgl}
\end{equation}
\begin{figure*}[!htb]
  \includegraphics[width=\textwidth]{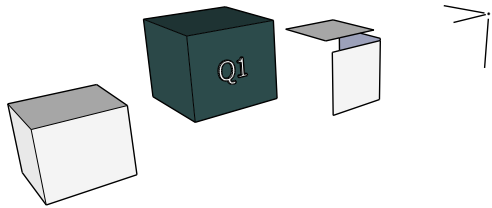}
\caption{ Regions on whose beliefs the message $m_{Q_1\to s_1}(s_1)$ from the cubic region $Q_1$ to the central spin $s_1$ appears. 
On the leftmost diagram, $Q_1$ 
is still represented to help guiding the eye, but it does not belong to $ \partial m_{Q_1\to s_1}$. The cube opposing $Q_1$, 
the three palquettes and the three edges and $s_1$ itself, they all intersect with $Q_1$ only at $s_1$, and therefore are in  $ \partial m_{Q_1\to s_1}$.\label{fig:cube3D_mQ1_s1}}
\end{figure*}

As a consequence also of the second point in the definition, any given message $m_{p\to r}$ is present in the belief  equations of all regions whose intersection with $p$ is exactly $r$:
\[ \partial m_{p\to r} = \{r'\in R | r' \cap p = r\}
\] 
In order for this prescription to be valid, we have to show that it satisfies (\ref{eq:mc_r}).

To keep with our previous 3D example, let us consider the case of the message $m_{Q_1 \to s_1}(s_1)$, going from the cube $Q_1$ (dark) to the central 
spin $s_1$, as shown in Fig. \ref{fig:cube3D_mQ1_s1}. Then the set of regions $ \partial m_{Q_1\to s_1} $ on whose beliefs the message $m_{Q_1\to s_1}(s_1) $
appear are those whose intersection with $Q_1$ is exactly $s_1$, as represented in the diagram (Fig. \ref{fig:cube3D_mQ1_s1}).

\subsubsection{Properties of maximal messages}

\begin{theo}
Equation (\ref{eq:brgl}) defines a valid GBP approximation on any set of regions $R$ defined by the cluster variational method. %d-dimensional regular square lattice, if the maximal regions $R_0$ are composed from all $d_k\leq d$ square cells. 
\end{theo}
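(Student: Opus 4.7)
The plan is to reduce the statement to verifying the single algebraic condition (\ref{eq:mc_r}). As the excerpt establishes, introducing the multiplicative test functions leaves the partition function unchanged \emph{as soon as} for every message $z$ one has $\sum_{r'\in\partial z}c_{r'}=0$. With the maximal-message prescription $\partial m_{p\to r}=\{r'\in R : r'\cap p = r\}$, validity of the GBP approximation is therefore equivalent to
\[
\sum_{r'\in R,\; r'\cap p = r} c_{r'} \;=\; 0
\]
for every pair $r\subsetneq p$ with $r,p\in R$. Establishing this identity is the entire content of the theorem.

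To prove it I would introduce, for each $\gamma\in R$ with $r\subseteq\gamma\subseteq p$, the partial sum
\[
S(\gamma) \;:=\; \sum_{r'\in R,\; r'\cap p = \gamma} c_{r'},
\]
so that the target identity reads $S(r)=0$. The first step is to observe that because $R$ is closed under intersection (this closure is explicitly part of the CVM construction recalled above (\ref{eq:E_r})), for every $r'\supseteq r$ the intersection $r'\cap p$ is itself an element of $R$ lying in the interval $[r,p]$. This partitions the ancestor set of $r$ (together with $r$ itself) as
\[
\{r'\in R : r'\supseteq r\} \;=\; \bigsqcup_{\gamma\in R,\; r\subseteq\gamma\subseteq p} \{r'\in R : r'\cap p = \gamma\}.
\]

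The second step is to recognize that the counting-number recursion (\ref{eq:c_r}) is equivalent to the clean identity $\sum_{r'\supseteq r}c_{r'}=1$ for every $r\in R$. Applied to the partition above this gives, for each $r\subseteq p$,
\[
1 \;=\; \sum_{\gamma\in R,\; r\subseteq\gamma\subseteq p} S(\gamma). \qquad(\star)
\]
Now I would run a finite descending induction on $\gamma$ within the interval $[r,p]$ ordered by reverse inclusion. Base case: taking $r=p$ in $(\star)$ collapses the sum to a single term, yielding $S(p)=1$. Inductive step: suppose $S(\gamma')=0$ for every $\gamma'$ strictly between the current $\gamma$ and $p$; then $(\star)$ applied at $\gamma$ becomes $1=S(\gamma)+S(p)+0$, forcing $S(\gamma)=0$. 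Specializing to $\gamma=r$ completes the proof.

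I expect the only genuinely subtle point to be justifying the partition cleanly, since it is the place where closure of $R$ under intersection is used in an essential way; everything else is a textbook Möbius-type cascade on the poset $[r,p]$. A minor bookkeeping care is needed to make sure the index set in $(\star)$ is the same as the one indexing the induction (both range over regions of $R$ inside $[r,p]$, not over arbitrary subsets), but no further combinatorics intervenes.
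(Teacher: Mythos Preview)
Your proof is correct and takes a genuinely different route from the paper.

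The paper proves the identity by splitting the extended ancestor set $A^o_\alpha$ (your $\{r'\in R:r'\supseteq r\}$) into $B_{r_0,\alpha}=\{r':r'\cap r_0=\alpha\}$ and its complement $\bar B^o_{r_0,\alpha}=\{r':r'\cap r_0>\alpha\}$, and then argues that $\sum_{r'\in\bar B^o_{r_0,\alpha}}c_{r'}=1$. To obtain this last equality it develops several poset lemmas: $\bar B^o_{r_0,\alpha}$ is closed under intersection with $r_0$, it is the union of the (extended) ancestries of its minimal elements, and any finite intersection of such ancestries is again an ancestry. It then applies inclusion--exclusion over the minimal elements, with every inner sum collapsing to $1$, and finishes with the binomial identity $\sum_{k=1}^{K}(-1)^{k+1}\binom{K}{k}=1$.

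Your argument bypasses all of this structure by stratifying $\{r':r'\supseteq r\}$ according to the value of $r'\cap p$ and running a top-down recursion on the interval $[r,p]\cap R$: the family of identities $\sum_{\gamma'\in[\gamma,p]\cap R}S(\gamma')=1$, one for each $\gamma$, forces $S(p)=1$ and then $S(\gamma)=0$ for every $\gamma<p$ by descending induction along any linear extension. This is the natural M\"obius-type argument on the interval and uses only the closure of $R$ under intersection together with the normalization $\sum_{r'\supseteq\gamma}c_{r'}=1$. It is shorter, avoids the auxiliary lemmas on minimal elements and intersections of ancestries, and makes transparent why \emph{every} $S(\gamma)$ with $\gamma\neq p$ vanishes, not just $S(r)$. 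What the paper's approach buys in exchange is a structural description of $\bar B^o_{r_0,\alpha}$ as a union of upper sets generated by the covers of $\alpha$ inside $r_0$; that picture is reused later in Appendix~\ref{ap:beliefs_rotation}, but it is not needed for the validity statement itself.
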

This theorem is proved in the appendix \ref{ap:mmmmvalid}, based on the fact that the cluster variational method defines a partially ordered set, 
and some properties relating the ancestors of a region and the set of equations in which messages to that region participates. 

Extremal values of $\fcvm$ are  obtained by enforcing eq.~(\ref{eq:dF_dm}). Since all messages appear linearly, differentiating is equivalent to remove the messages from the equations in which they are present. In order to obtain a nicer presentation we can solve instead
\[ m_{r_0\to\gamma} (\underline{x}_{\gamma})  \:\frac{\partial F_R[\{m \}]}{\partial m_{r_0\to\gamma} } =0
\]
which  generates the following set of equations:
\[ \sum_{r\in\partial m_{r_0\to\gamma} } c_r\sum_{\underline{x}_{r}\setminus \underline{x}_{\gamma}} b_r(\xr) = 0
\]
Notice that, as a consequence of equation (\ref{eq:mc_r}), a particular solution to this equation is found when 
each belief involved has the same marginal over the degrees of freedom $\underline{x}_\gamma$.  
Since beliefs are usually interpreted as approximations of the marginals of the joint probability distribution  
(\ref{eq:Px_boltzmann}), we would require them to be consistent with one another. It is 
assuring to see that the consistency indeed is a solution of the extremal equations. In appendix   \ref{ap:beliefs_rotation}
we proof the following
\begin{theo} The extremal points of the approximated variational free energy  $F_R[\{m\}]$ are found at consistent beliefs:
\begin{equation}
\forall r\in R \:\:\: \forall p \in A_r  \quad b_r(\xr) = \sum_{\underline{x}_{p}\setminus \xr}  b_{p}(\underline{x}_{p}) 
\label{eq:consistency}
\end{equation}
\end{theo}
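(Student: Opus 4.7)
The plan is to cast the stationarity conditions as a linear system on the belief marginals and to argue that this system is non-degenerate, forcing the consistency relation as its unique solution. First I would differentiate $F_R[\{m\}]$ with respect to $m_{p \to \gamma}(\xg)$ at fixed $\xg$; because each message enters each local partition function $z_r$ linearly through the product in (\ref{eq:brgl}), the chain rule yields
\[
m_{p\to\gamma}(\xg)\,\frac{\partial F_R}{\partial m_{p\to\gamma}(\xg)} \;\propto\; \sum_{r\in\partial m_{p\to\gamma}} c_r\,\mu_r(\xg),
\]
with $\mu_r(\xg) \equiv \sum_{\xr\setminus\xg} b_r(\xr)$. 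The stationarity condition is therefore
\[
\sum_{r\in\partial m_{p\to\gamma}} c_r\,\mu_r(\xg) = 0
\]
for every pair $(p,\gamma)$ with $\gamma\subset p$ and every value of $\xg$. The ``if'' half of the statement is immediate: substituting the consistent choice $\mu_r=b_\gamma$ into this equation and factoring $b_\gamma(\xg)$ out, the remaining factor is $\sum_{r\in\partial m_{p\to\gamma}} c_r$, which vanishes by the message-counting identity (\ref{eq:mc_r}) exploited in the proof of Theorem~1.

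For the converse, that stationarity implies consistency, my plan is as follows. For a fixed region $\gamma$, I view the stationarity equations, indexed by $p\in A_\gamma$, as a linear system on the unknown marginals $\{\mu_r(\xg) : r\in A_\gamma\}$, with $\mu_\gamma\equiv b_\gamma$ playing the role of an affine source term. There are as many equations as ancestors of $\gamma$; the coefficient matrix has entries $M_{p,r}=c_r\,\mathbf{1}[r\cap p=\gamma]$. The consistent choice $\mu_r(\xg)=b_\gamma(\xg)$ for every $r\in A_\gamma$ is a particular solution, so the theorem reduces to showing that $M$ has trivial kernel. My strategy would be to triangularise $M$ by exhibiting a pairing of each ancestor $r\in A_\gamma$ with a ``$\gamma$-complement'' ancestor $p(r)\in A_\gamma$, that is, some $p(r)$ satisfying $p(r)\cap r=\gamma$, and to order these pairs so that the resulting matrix is triangular with non-zero diagonal entries $c_r$.

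The main obstacle, and the geometric heart of the argument, is precisely this matching claim on the poset $A_\gamma$. Its validity rests on two specific features of the CVM construction: the closure of $R$ under intersections proved in appendix~\ref{ap:mmmmvalid}, which guarantees that for any $r_1,r_2\in A_\gamma$ the intersection $r_1\cap r_2$ again lies in $A_\gamma\cup\{\gamma\}$; and the ``maximal'' choice of messages, thanks to which the equation and unknown counts match exactly. I expect to verify the matching either directly, by induction on the rank of ancestors in $A_\gamma$, or through a Hall-type existence argument on the bipartite incidence structure $r\sim p\Leftrightarrow r\cap p=\gamma$. By contrast, the same calculation in the Parent-to-Child scheme supplies strictly fewer equations and the analogous matrix becomes rectangular with a non-trivial kernel, which is precisely the gauge invariance that motivates the present work. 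Once the matching is established, $M$ is non-singular and the unique stationary solution is $\mu_r(\xg)=b_\gamma(\xg)$ for all $r\in A_\gamma$ and every $\xg$, which is the consistency relation (\ref{eq:consistency}).
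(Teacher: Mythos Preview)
Your derivation of the stationarity equations and the ``if'' direction are correct and coincide with the paper's. For the converse you take a genuinely different route, and the step you yourself flag as ``the main obstacle'' is a real gap, not a formality.

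The paper does not attack the $|A_\gamma|\times|A_\gamma|$ matrix $M$ head-on. It proceeds by induction down the CVM hierarchy: assuming every region $r_i$ that \emph{covers} $\gamma$ is already consistent with its own ancestors, it shows each $r_i$ marginalises onto $b_\gamma$. The structural lemma that drives this is that, because $r_i$ covers $\gamma$, any $p\in A^o_\gamma$ with $p\cap r_i>\gamma$ must in fact contain $r_i$; hence $\partial m_{r_i\to\gamma}=A^o_\gamma\setminus A^o_{r_i}$. The stationarity equation indexed by $r_i$ then rewrites as
\[
\sum_{r\in A^o_\gamma} c_r\,\mu_r(\xg)\;=\;\sum_{r\in A^o_{r_i}} c_r\,\mu_r(\xg)\,.
\]
The left side is independent of $i$; the right side, using the inductive hypothesis and $c_{r_i}=1-\sum_{r\in A_{r_i}}c_r$, collapses to the single marginal $\sum_{\underline{x}_{r_i}\setminus\xg} b_{r_i}$. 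All cover marginals are therefore equal, and a second use of the same identity pins them to $b_\gamma$. Consistency with the full ancestry then follows by transitivity.

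Your plan instead requires a bijection $r\mapsto p(r)$ on $A_\gamma$ with $p(r)\cap r=\gamma$ \emph{and} an ordering that renders $M$ triangular. Neither is established. Even the existence of a single $\gamma$-complement for every $r\in A_\gamma$ is not automatic and would need a separate argument from CVM closure. A Hall-type argument would at best produce a perfect matching, which guarantees only one non-zero term in the Leibniz expansion of $\det M$; since the $c_r$ have mixed signs, cancellation is possible, so you genuinely need the triangular order --- a strictly stronger combinatorial claim you have not justified. In effect, the paper's induction \emph{is} the triangularisation you are looking for, but it is obtained through the cover lemma rather than an abstract matching on the whole of $A_\gamma$. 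If you want to keep your linear-algebra framing, proving that lemma and then solving the system level by level (covers first, given the already-solved higher levels) is the natural way to close the gap.
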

%It should be proved that this is in fact the only solution, 
%something that is done in appendix \ref{ap:beliefs_rotation}.
%Therefore treating the approximated free energy $F_R[\{m\}]$ variationally over 
%the maximal messages, guarantees the consistency of the local probabilities 
%distributions defined as beliefs.
%inductively in  \cite{yedidia04}. Since our method only differs in the choice of the messages to that 
%in  \cite{yedidia04}, all results concerning the beliefs are equivalent.

From these equations we can write message passing update rules in different ways. Unfortunately maximal message passing do not 
solves automatically the gauge invariance in the messages. Just as in \ptc\  case, the introduced messages do not define the beliefs 
in an unique way, and many possible messages values may represent the same beliefs. Besides its non optimality as a representation, 
and probably derived efficiency/convergence problems, this invariance might be problematic for average case predictions, 
for instance the prediction of critical temperatures in disordered systems.
%In the Ising model critical temperature found before it was not a problem either, since the supposition that all messages of the same type are equal is equivalent to fixing the gauge.

Another relevant property of maximal message passing, is that it is hierarchical.
\begin{theo}
Let there be two CVM approximations for a given model. If one of the approximations is contained in the other, meaning that 
all regions in one are present in the other
\[
R_{\text{\tiny CVM}_1} \subset R_{\text{\tiny CVM}_2}
\]
then the beliefs in 
the smaller approximation (the less precise one) are obtained from the larger approximation, by just 
setting to 1 all messages not common to both.
\end{theo}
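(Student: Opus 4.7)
The plan is to argue directly from the definition of maximal messages and the explicit belief formula (\ref{eq:brgl}), showing that the $R_{\text{\tiny CVM}_1}$--belief at any region $r\in R_{\text{\tiny CVM}_1}$ is recovered from the $R_{\text{\tiny CVM}_2}$--belief at the same $r$ simply by deleting from the product the messages whose indices involve regions that live only in the larger approximation. The Boltzmann factor $e^{-\B E_r(\xr)}$ is common to both expressions, so only the product of messages (indexed by $D_r$) needs to be tracked. Note also that the counting numbers $c_r$ do not enter (\ref{eq:brgl}), so although they can change when one enlarges the region set, they play no role in the argument.

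First I would identify which messages are ``common to both'' approximations. Since a maximal message $m_{p\to\gamma}$ is indexed by a pair of regions with $\gamma\subset p$, both of which must belong to the approximation in order for that message to exist, a common message is precisely one with $p,\gamma\in R_{\text{\tiny CVM}_1}$. Setting to $1$ all other messages in the $R_{\text{\tiny CVM}_2}$--belief of $r$ reduces the product to those $(p,\gamma)\in D_r^{(2)}$ with $p,\gamma\in R_{\text{\tiny CVM}_1}$.

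The key step is then the set equality
\[
D_r^{(1)} \;=\; D_r^{(2)}\cap\bigl(R_{\text{\tiny CVM}_1}\times R_{\text{\tiny CVM}_1}\bigr), \qquad r\in R_{\text{\tiny CVM}_1},
\]
which I would prove directly from $D_r = \{(p,\gamma) : p\nsubseteq r,\ p\cap r=\gamma\neq\varnothing\}$. Both defining conditions are intrinsic to the regions --- set inclusion and set intersection --- and do not refer to the ambient CVM region set. The ``$\supset$'' inclusion is then immediate. For ``$\subset$'' one uses that $R_{\text{\tiny CVM}_1}$ is itself a CVM region set and hence closed under intersections, so for any $p,r\in R_{\text{\tiny CVM}_1}$ the region $\gamma = p\cap r$ again lies in $R_{\text{\tiny CVM}_1}$. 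Combined with the preceding paragraph, this proves that the unnormalized belief obtained from (\ref{eq:brgl}) in the larger approximation, with non--common messages set to $1$, coincides functionally with the unnormalized $R_{\text{\tiny CVM}_1}$--belief; the normalizations $z_r$ agree automatically because both beliefs are the sum of the same function over $\xr$.

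The substance is really combinatorial bookkeeping: the main obstacle to watch out for is the difference between saying a message is ``absent'' and saying it is ``set to $1$''. This is exactly why the multiplicative convention of (\ref{eq:brgl}) matters --- $1$ is the identity of the product --- and why the claim is specific to the maximal--messages prescription, in which the set $D_r$ depends only on which pairs of ancestors and intersections are realized in the region poset. No analogous hierarchical reduction would hold if messages had been defined in a way whose structure depended on, say, counting numbers or on global properties of the region graph.
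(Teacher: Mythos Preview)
Your argument is correct and follows exactly the paper's approach: the paper's entire proof is the single remark that ``the definition of $D_r$ implies that $D_r^1\subset D_r^2$,'' and you have simply spelled this out as the equality $D_r^{(1)}=D_r^{(2)}\cap(R_{\text{\tiny CVM}_1}\times R_{\text{\tiny CVM}_1})$, together with the (correct) observation that the counting numbers play no role in~(\ref{eq:brgl}). One small comment: the closure of $R_{\text{\tiny CVM}_1}$ under intersection is not actually needed for the ``$\subset$'' inclusion --- both inclusions follow directly because the defining conditions $p\nsubseteq r$, $p\cap r=\gamma\neq\varnothing$ are intrinsic and $R_{\text{\tiny CVM}_1}\subset R_{\text{\tiny CVM}_2}$ --- so that step, while harmless, is superfluous.
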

The proof is immediate since the definition of $D_r$ implies that $D_r^1 \subset D_r^2$. 

As a consequence, if one wants to recover the Bethe approximation from, e.g., the plaquette approximation, we only needs to 
disregard (setting to 1) the plaquette-to-link and plaquette-to-spin messages in the belief and messages passing equations. This property 
is also valid in the case of gauge free maximal messages that we explain next. We recall, however, that two different approximations have 
different counting numbers, and therefore the free energy of the smallest approximation is not obtained by setting to zero
the terms of the larger.

\subsection{Moment matching is gauge free} 
\label{sec:momentmatching}

The general way to create a message passing that is also gauge free starts from recognizing that the relevant quantities to match are not necessarily the belief as in eq.~(\ref{eq:consistency}) but their moments. Next we present the case of Ising variables $s_i = \pm 1$. A more general presentation, 
regarding for instance Potts variables, is left for future work.

Let's go back to the messages. It has been shown in the context of \ptc\  message passing  \cite{GBPGF,zhou_redundant} that when the region graph 
contains loops, i.e. when from a bigger region there are two paths to get to a smaller region in the region graph, then the 
messages are not uniquely determined. In other words, since marginalization is transitive, forcing $b_p \to b_{r_1} \to b_{r_0}$ and 
$b_p \to b_{r_2} \to b_{r_0}$ is redundant. The marginalization $ b_{r_2} \to b_{r_0}$, for instance, automatically follows from the first chain 
of marginalizations and $b_p \to b_{r_2} $. In other words, interpreting the messages as Lagrange multipliers  \cite{yedidia}, the introduction 
of a multiplier to force  $ b_{r_2} \to b_{r_0}$ is unnecessary, and therefore, the set of multipliers is not uniquely determined.

A workaround to this problem has been given previously \cite{GBPGF,zhou_redundant,Pakzad2005Kikuchi} where authors have identified a 
link between gauge invariance and loops in the region graph representation. 
At the end, it all amounts to discovering which are the redundant messages, and remove them from the representation, or set them to 
an arbitrary value, fixing the gauge  \cite{zhou_redundant,Pakzad2005Kikuchi,Aurelien16}. In many case, although the final
objective is clear (destroying the loops), there are many different ways to achieve it, and each one has selected his own way.
Next we explain how to construct gauge-free message passing algorithms from scratch, not by destroying loops, but by restricting
degrees of freedom in the messages. We specify a precise and unique way to do so.

So far, maximal messages were introduced in full generality. Now we will reduce their degrees of freedom
as long as they keep ensuring the consistent marginalization of neighbor regions.  
Proceeding in this way, we do not affect the overall minimization of the CVM free energy.

We will now change perspective and interpret messages not as arbitrary functions enforcing beliefs consistency, but rather as a set of fields enforcing the agreement between the moments in the beliefs. For instance a message to a 2-spin regions, can be rewritten as
\be
m_{p\to1,2}(s_1,s_2) = e^{s_1 s_2 U_{p\to 1,2}  + s_1 u_{p\to 1}  + s_2 u_{p\to 2} }.
\ee
The four values of function $m(s_1,s_2)$ are encoded into the 3 parameters $U,u_1,u_2$ since messages are insensitive to any normalization factor, a consequence of property in eq. (\ref{eq:mc_r}). Let us assume that the fields $u_{p\to 1},u_{p\to 2}$ fix the correct firsts moments 
between the beliefs at region $p$ and at the two-spin region $(1,2)$, while $U$ determines the correlation. In such case, since all parents of region $(1,2)$ are sending messages to it, all those parents will have a first and second moments on variables $s_1$ and $s_2$ that are consistent to that of the belief at $(1,2)$ and therfore are consistent among them. This also means that 
given two ancestors of $(1,2)$, lets say $g,p\in A_{(1,2)}$ such that $g \in A_p$, the messages from $g$ to $p$ do not require fields of the type $u_1$, $u_2$ and $U_{1,2}$ any longer.

An example is handy. Take for instance the 2D square lattice, a small fraction of which is represented here
\begin{center}
\begin{pspicture}(3,3) 
%\psgrid
%\psframe(3,2)(3.3,3) 
%\rput(2,2.5){First Example}
\psline[linewidth=.05cm]{-}(0,0)(-0.3,0)
\psline[linewidth=.05cm]{-}(0,0)(1,0)
\psline[linewidth=.05cm]{-}(1,0)(2,0)
\psline[linewidth=.05cm]{-}(2,0)(3,0)
\psline[linewidth=.05cm]{-}(3,0)(3.3,0)

\psline[linewidth=.05cm]{-}(0,1)(-0.3,1)
\psline[linewidth=.05cm]{-}(0,1)(1,1)
\psline[linewidth=.05cm]{-}(1,1)(2,1)
\psline[linewidth=.05cm]{-}(2,1)(3,1)
\psline[linewidth=.05cm]{-}(3,1)(3.3,1)

\psline[linewidth=.05cm]{-}(0,2)(-0.3,2)
\psline[linewidth=.05cm]{-}(0,2)(1,2)
\psline[linewidth=.05cm]{-}(1,2)(2,2)
\psline[linewidth=.05cm]{-}(2,2)(3,2)
\psline[linewidth=.05cm]{-}(3,2)(3.3,2)
\psline[linewidth=.05cm]{-}(0,3)(-0.3,3)
\psline[linewidth=.05cm]{-}(0,3)(1,3)
\psline[linewidth=.05cm]{-}(1,3)(2,3)
\psline[linewidth=.05cm]{-}(2,3)(3,3)
\psline[linewidth=.05cm]{-}(3,3)(3.3,3)

\psline[linewidth=.05cm]{-}(0,0)(0,-0.2)
\psline[linewidth=.05cm]{o-}(0,0)(0,1)
\psline[linewidth=.05cm]{o-}(0,1)(0,2)
\psline[linewidth=.05cm]{o-o}(0,2)(0,3)
\psline[linewidth=.05cm]{-}(0,3)(0,3.3)

\psline[linewidth=.05cm]{-}(1,0)(1,-0.2)
\psline[linewidth=.05cm]{o-}(1,0)(1,1)
\psline[linewidth=.05cm]{o-}(1,1)(1,2)
\psline[linewidth=.05cm]{o-o}(1,2)(1,3)
\psline[linewidth=.05cm]{-}(1,3)(1,3.3)

\psline[linewidth=.05cm]{-}(2,0)(2,-0.2)
\psline[linewidth=.05cm]{o-}(2,0)(2,1)
\psline[linewidth=.05cm]{o-}(2,1)(2,2)
\psline[linewidth=.05cm]{o-o}(2,2)(2,3)
\psline[linewidth=.05cm]{-}(2,3)(2,3.3)

\psline[linewidth=.05cm]{-}(3,0)(3,-0.2)
\psline[linewidth=.05cm]{o-}(3,0)(3,1)
\psline[linewidth=.05cm]{o-}(3,1)(3,2)
\psline[linewidth=.05cm]{o-o}(3,2)(3,3)
\psline[linewidth=.05cm]{-}(3,3)(3,3.3)

\end{pspicture}
\end{center}
and the square plaquette-CVM approximation. Regions are the plaquettes, the 
links and the spins (variables $s_i =\pm1$) in the system. Any single spin receives messages of the form
\[m_{r\to i} = e^{u_{r\to i} s_i}\:.\]
from the four links and the four plaquettes it belongs to. In the diagram, only the fields coming from one plaquette and two links are shown:

\begin{center}
\begin{pspicture}(4,3.5) 
%\psgrid
%\psframe(3,2)(3.3,3) 
%\rput(2,2.5){First Example}
\psline[linewidth=.05cm]{o-o}(1,1)(2,1)
\psline[linewidth=.05cm]{o-o}(2,1)(2,2)
\psline[linewidth=.05cm]{o-o}(2,2)(1,2)
\psline[linewidth=.05cm]{o-o}(1,2)(1,1)

\psline[linewidth=.05cm]{o-o}(3,1)(3,2)
\psline[linewidth=.05cm]{o-o}(1,3)(2,3)

\pscircle[linewidth=0.05cm](3,3){0.14}

%\psline[linewidth=.05cm,doubleline=true]{->}(0.2,1.5)(0.9,1.5)
\psline[linewidth=.05cm,doubleline=true]{->}(2.1,1.5)(2.8,1.5)
\psline[linewidth=.05cm,doubleline=true]{->}(1.5,2.1)(1.5,2.8)
%\psline[linewidth=.05cm,doubleline=true]{->}(1.5,0.2)(1.5,0.9)

\psline[linewidth=.05cm]{->}(2.15,2.15)(2.8,2.8)
\psline[linewidth=.05cm]{->}(2.2,3)(2.8,3)
\psline[linewidth=.05cm]{->}(3,2.2)(3,2.8)
\end{pspicture}
\end{center}

\noindent This ensures that the first moment of the beliefs in the plaquettes and 
the links are consistent with the first moment of the belief at spin $i$, $\<s_i\> = \sum_{s_i} s_i b_i (s_i)$. Therefore, when 
plaquettes are sending messages to links (double arrows in diagram), they no longer need a multiplier (field) $u_{p\to i}$, and the message will only force 
the correlation
\[m_{p\to (ij)} (s_i,s_j) = e^{U_{p\to ij} s_i s_j}\]
In such a way, even though the region graph have loops, the moments are not fixed redundantly, and the message passing is gauge-free. 

In general, a message $m_{p \to r}(\sr)$ has $2^{|r|}-1$ degrees of freedom, where $|r|$ is the number of binary variables (spins) 
in region $r$. There are also $2^{|r|}-1$ non null subsets of $r$, and therefore the same number of moments to describe a distribution 
over $|r|$ variables. The reduction of the degrees of freedom in the messages follow the rule:
\begin{defi}[Moment matching] \label{def:moment_matching}
Message $m_{p\to r}(\sr) $ contains a field $U_{q}$ enforcing the correlation among variables in $q\subseteq r$ as  
\[m_{p\to r}(\sr)  = e^{\cdots+U^{p\to r}_q \prod_{i\in q} s_i + \cdots}
\] 
if and only if  $r$ is the smallest region among all those containing the variables in $q$.
\end{defi}

The smallest region containing $q$ is uniquely determined in the CVM construction thanks to the following properties: (i) the partial order defined by inclusion relations and (ii) the closure of the set of CVM regions under intersections of sets (the proof is easy and left to the reader).

Extreme values of the approximated free energy $\fcvm$ can now be obtained by differentiating directly with respect to the fields $U_{q}$ that define the messages:
\[\frac{\partial \fcvm[\{U \}]}{\partial U^{p\to \gamma}_q } = 0
\]
which  generates the following set of equations:
\[ \sum_{r\in\partial m_{p \to \gamma} }  c_r \left<\prod_{i\in q} s_i \right>_{b_r} = 0
\]
and 
\[ \left<\cdots \right>_{b_r} = \sum_{\sr} \cdots b_r(\underline s_r)
\]
Obviously, a particular solution is found when all distributions share the same moments over common
degrees of freedom, since equation (\ref{eq:mc_r}) holds. In such case, all beliefs are also consistent 
with inner regions. It remains to show that this is in fact the only 
solution, which is the argument of the following
\begin{theo}\label{theo:mmmmconsistent}
 Maximal messages with moment matching fields ensures the consistency of beliefs.
\end{theo}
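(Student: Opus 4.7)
The plan is to show that, at any critical point of the approximate free energy $F_R[\{U\}]$, every sub-moment $M_q^r\equiv\langle\prod_{i\in q}s_i\rangle_{b_r}$ depends only on the subset $q$ and not on the region $r\supseteq q$ carrying it. Since for binary variables a belief on $|r|$ spins is in bijection with its $2^{|r|}-1$ nontrivial moments, this reformulation is exactly the marginalization statement (\ref{eq:consistency}).

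Fixing a nonempty subset $q$ of variables, I would invoke closure of $R$ under intersection (as already used in Definition \ref{def:moment_matching}) to single out the unique smallest CVM region $\gamma\equiv\gamma(q)$ containing $q$. The set of regions on which $M_q^r$ is defined is then $\{\gamma\}\cup A_\gamma$, and the goal reduces to $M_q^r = M_q^\gamma$ for every $r\in A_\gamma$. That consistency solves the extremum equations is the easy direction: the equation associated to the field $U^{p\to\gamma}_q$ reads $\sum_{r\in\partial m_{p\to\gamma}} c_r\,M_q^r=0$, so if all these moments coincide the common value factors out and (\ref{eq:mc_r}) makes the sum vanish.

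For the converse I would set $\delta_r = M_q^r - M_q^\gamma$, producing $\delta_\gamma=0$ and a homogeneous linear system of $|A_\gamma|$ equations in the $|A_\gamma|$ unknowns $\{\delta_r\}_{r\in A_\gamma}$. The matching of these two counts is structural: Definition \ref{def:moment_matching} is designed precisely so that the number of surviving fields equals the number of consistency constraints attached to $q$. I would then induct down the poset $A_\gamma$, starting from its maximal elements. For a maximal $p\in A_\gamma$, the set $\partial m_{p\to\gamma}\setminus\{\gamma\}$ contains only regions disjoint from $p\setminus\gamma$ modulo $\gamma$, and the equation associated to $p$ expresses a ``top-level'' $\delta$ as a linear combination of strictly smaller ones. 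Back-substituting into the equations attached to the non-maximal ancestors collapses them into relations among only the remaining unknowns, and iterating down the poset eventually pins $\delta_r=0$ for every $r\in A_\gamma$. The 2D plaquette diagram in Section \ref{sec:momentmatching} is a transparent instance of this scheme, where the equations attached to plaquettes determine each plaquette-level $\delta$ in terms of link-level $\delta$'s, and back-substitution into the link equations forces the latter to vanish as well.

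The main obstacle is the combinatorial bookkeeping in the general case: one must check that, for every $r\in A_\gamma$, the CVM closure under intersection together with the recursive counting-number identity (\ref{eq:c_r}) produces a well-founded elimination order and leaves a non-vanishing coefficient in front of $\delta_r$ at the appropriate step. Since the whole argument runs at fixed $q$ and never mixes different moments, the conclusion is precisely that the moment-matching construction is gauge-free, as announced by the theorem.
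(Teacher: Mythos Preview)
Your overall reduction matches the paper's: work moment by moment, locate the unique minimal region $\gamma=\gamma(q)$, derive one linear equation per ancestor $p\in A_\gamma$ in the unknowns $M_q^r$ (or equivalently $\delta_r$), and observe that the all-equal assignment is a solution by (\ref{eq:mc_r}). The paper carries out exactly this and then rewrites the system with a square matrix $G_{K\times K}$, $g_{r,p}=c_r\mathbf{1}[r\in\partial m_{p\to\gamma}]$, whose nonsingularity would force the consistent solution to be the only one. It then \emph{does not prove} nonsingularity in general: it explicitly asks the reader to check $\det G\neq 0$ in each concrete CVM, and only lists obstructions (zero counting numbers, a message absent from all beliefs, repeated rows). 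So the hard step you flag as ``the main obstacle'' is precisely the step the paper leaves open too.

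Where your sketch diverges from the paper and needs repair: the elimination order you describe is not quite right. The equation attached to $p$ runs over $\partial m_{p\to\gamma}=\{r:r\cap p=\gamma\}$, and $p$ itself is \emph{not} in this set, so that equation does not isolate $\delta_p$; for a maximal $p$ it is rather the complementary identity $\sum_{r\in A^o_\gamma\setminus\{p\}}c_r\delta_r=0$, which (using $c_p=1$) says $\delta_p=\sum_{r\in A^o_\gamma}c_r\delta_r$ and hence only that all maximal $\delta_p$ coincide, not that any single one vanishes. A genuine top-down elimination therefore needs a different bookkeeping than ``the equation from $p$ fixes $\delta_p$'', and your 2D heuristic does not generalize as stated. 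Finally, your last sentence drifts: the theorem you are asked to prove is \emph{consistency of beliefs}; gauge-freeness is the separate Theorem~\ref{theo:mmmmgauge}, and the paper argues it by a counting of fields versus independent constraints rather than by the $\delta$-analysis.
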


The previous theorem states that the moment matching fields are enough to guarantee consistency. The next one completes our task 
by stating that indeed we need all of these fields to do so.
\begin{theo}\label{theo:mmmmgauge}
 Maximal messages with moment matching fields is gauge free.
\end{theo}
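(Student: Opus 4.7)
The plan is to reduce Theorem~\ref{theo:mmmmgauge} to a rank statement about an incidence matrix attached to each ancestor poset, and then prove that statement by induction on $|A_\gamma|$ using the closure of $R$ under intersections. Suppose $\{\delta U^{p\to\gamma}_q\}$ is a perturbation of the fields that leaves every belief $b_r$ invariant, i.e.~$\delta\log b_r(\sr)=0$ as a polynomial in $\sr$ for every $r\in R$. Since the fields enter eq.~(\ref{eq:brgl}) additively in the exponent, I can isolate coefficients monomial by monomial. By Definition~\ref{def:moment_matching} and the uniqueness of the smallest region $\gamma=\gamma(q)\in R$ containing a non-empty $q$, the only fields that can affect the coefficient of $\prod_{i\in q}s_i$ in any $\log b_r$ are $\{\delta U^{p\to\gamma}_q:p\in A_\gamma\}$. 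Hence the gauge problem decouples across $q$, and it suffices to treat one such $q$ at a time.

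For fixed $q$ with $\gamma=\gamma(q)$, the monomial $\prod_{i\in q}s_i$ can appear in $\log b_r$ only for $r\in\{\gamma\}\cup A_\gamma$. Combining this with the incidence rule $r\cap p=\gamma$ that selects in which beliefs the message $m_{p\to\gamma}$ appears, the invariance conditions take the form
\[
\sum_{p\in A_\gamma:\,r\cap p=\gamma}\delta U^{p\to\gamma}_q=0\qquad\text{for every }r\in\{\gamma\}\cup A_\gamma.
\]
The theorem then reduces to showing that the $(|A_\gamma|+1)\times|A_\gamma|$ incidence matrix $M_{r,p}=\mathbf{1}[r\cap p=\gamma]$ has trivial right null space for every non-maximal $\gamma\in R$. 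The row $r=\gamma$ is the all-ones row (since $\gamma\cap p=\gamma$ whenever $p\supseteq\gamma$), while for $p\in A_\gamma$ the row $r=p$ has a zero in column $p$ because $p\cap p=p\neq\gamma$. Subtracting each $p$-row from the $\gamma$-row rewrites the system as $(I+A)\,\delta U=0$, where $A$ is the symmetric $|A_\gamma|\times|A_\gamma|$ adjacency matrix $A_{p,p'}=\mathbf{1}[p\cap p'\supsetneq\gamma]$; the claim reduces further to showing that $\det(I+A)\neq 0$, equivalently that $-1$ is not an eigenvalue of $A$.

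I would establish this by induction on $|A_\gamma|$, leveraging the CVM property that every non-maximal region of $R$ is the intersection of two strictly larger elements of $R$; closure applied to $\gamma$ forces $|A_\gamma|\ge 2$ and yields two distinct $p_1,p_2\in A_\gamma$ with $p_1\cap p_2=\gamma$. In the base case $|A_\gamma|=2$ the matrix $A$ vanishes and $I+A=I$. For the inductive step, I pick an inclusion-maximal $p^{*}\in A_\gamma$ (which must lie in $R_0$); maximality implies that $p^{*}$ cannot appear as a factor in the intersection chain in $R_1,R_2,\ldots$ producing strictly smaller elements of $A_\gamma$, so the witness $\gamma=p_1\cap p_2$ can be chosen inside $A_\gamma\setminus\{p^{*}\}$. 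The plan is to reduce the spectral problem on $A$ to the analogous problem on the principal submatrix $A'$ obtained by deleting the row and column indexed by $p^{*}$, which satisfies the hypothesis on the smaller ancestor set and is therefore invertible plus $I$ by the inductive assumption, and then to control the remaining Schur-complement correction using the diagonal entry $A_{p^{*},p^{*}}=0$.

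The main obstacle I anticipate is precisely this Schur-complement step: the correction term $-v^{\!\top}(I+A')^{-1}v$, with $v$ the coupling vector of $p^{*}$ to the remaining ancestors, must be shown to differ from $-1$, which requires identifying $v^{\!\top}(I+A')^{-1}v$ combinatorially in terms of $C_{p^{*}}=\{p\in A_\gamma\setminus\{p^{*}\}:p^{*}\cap p\supsetneq\gamma\}$ and its interactions with the remaining adjacencies. A cleaner alternative I would explore first is a direct constructive left inverse: build $L$ row-by-row from Möbius coefficients on the sub-poset $\{s\in R:\gamma\subseteq s\subseteq p\}$ for each $p\in A_\gamma$, and verify $L\cdot M=I$ by reducing the identity to the Möbius relation on this sub-poset, using closure of $R$ under intersections. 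Whichever route succeeds, the crux is the same: converting the structural fact that $\gamma$ is forced to be an intersection within $A_\gamma$ into the non-singularity of $I+A$.
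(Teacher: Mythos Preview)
Your reduction is correct and is considerably sharper than what the paper does. Decoupling by monomial via Definition~\ref{def:moment_matching}, restricting to $r\in\{\gamma\}\cup A_\gamma$, and rewriting the null-space condition as $(I+A)\,\delta U=0$ with $A_{p,p'}=\mathbf 1[p\neq p',\,p\cap p'\supsetneq\gamma]$ is all sound; non-singularity of $I+A$ is indeed a sufficient condition for the absence of gauge directions.

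The paper's own argument, however, takes a very different and much weaker route. In Appendix~C.2 it does \emph{not} prove non-degeneracy of any matrix. It simply counts: for each $q$ with smallest containing region $r_0$, there are $K=|A_{r_0}|$ fields $U^{p\to r_0}_q$, and making the $K{+}1$ moments $\xi_{q,r}$ agree requires at least $K$ equalities (a spanning tree on $K{+}1$ nodes has $K$ edges). From ``\#fields $=$ minimum \#constraints'' the paper concludes there is no redundancy. This is a dimension match, not a rank statement; it does not by itself exclude a non-trivial kernel. In fact, for the companion Theorem~\ref{theo:mmmmconsistent} the paper explicitly leaves the analogous non-degeneracy ($\det G\neq 0$) to be verified case by case by the reader. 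So your setup already goes further toward an actual proof than the paper does.

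That said, your proposal is also incomplete at the key step, and you flag this yourself. Two remarks on the plan. First, the induction you outline is not literally on CVM ancestor sets: when you delete a maximal $p^{*}$, the remaining $A_\gamma\setminus\{p^{*}\}$ need not be the ancestor set of $\gamma$ in any CVM; you would have to phrase the inductive hypothesis for the broader class of finite meet-closed posets with bottom element $\gamma$ (which is preserved under removing a maximal element). Second, the Schur-complement route really is the hard part: there is no obvious sign or norm bound forcing $v^{\top}(I+A')^{-1}v\neq 1$ in general, so absent extra structure this step needs a genuinely new idea. Your M\"obius-inverse alternative is more promising, since the counting numbers already furnish one explicit linear functional annihilating each column-pattern (Appendix~A shows $\sum_{r:\,r\cap p\supsetneq\gamma}c_r=1$ and $\sum_{r:\,r\cap p=\gamma}c_r=0$); building a full left inverse from M\"obius coefficients on the intervals $[\gamma,p]$ is the natural next attempt, but you would still need to carry it out.

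In short: your approach is a genuine and more rigorous attack on the statement than the paper's counting heuristic, but the decisive non-singularity step is not yet established in either place.
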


Both theorems are proved in appendix \ref{ap:mmmmgauge}

%Selction of the regions, selection of the messages. Arbitratiness. Gauge less messages (prove).

%Square lattices: d dimensional counting numbers. Rule for the definition of the beliefs.

%Marginalization rules. Number of equations.

\section{Plaquette-CVM for Ising 2D}
\label{sec:isingd}

Let us start by a simple case. Ising ferromagnet, in the absence of external fields are defined
by the Hamiltonian
\[ \Ham (\alls)  = - J \sum_{\<i,j\>} s_i  s_j \]
where $\<i,j\>$ defines nearby spins, and is given by the topology in which the system is embedded, and the degrees of freedom  
are $s_i =\pm 1$. The interaction constant $J$ is normally set to $J=1$.

Though the 2D case of this model has been exactly solved  \cite{onsager}, we still can try our approximation on it, 
before moving to the unsolved higher dimensions. The first approximation beyond mean field and Bethe, is the one containing 
all square plaquettes (the basic cell)  as maximal regions. The cluster variation method then prescribe a free energy in 
terms of Plaquettes, Links and Spins regions \cite{GBPGF}, with counting numbers $c_P =1, c_L=-1, c_i = 1$ respectively.

The gauge free 4M-CVM is then written in terms of messages going from plaquettes to the  links and spins interior to it. Beliefs are defined as follows 
\begin{eqnarray}
b_P(\sfour) & =& \frac1 {z_P} e^{-\B E_4(\sfour)} \prod_{L\in P}\prod_{\substack{ P'\supset L\\ P'\neq P}} m_{P'\to L}(s,s') \nonumber \\ 
&& \phantom{e^{-\B E_4(\sfour)}} \prod_{s\in P}\prod_{\substack{ P'\cap P = s}} \textcolor[rgb]{1,0,0}{ m_{P'\to s} (s)}\label{eq:beliefP} \\ 
&& \phantom{e^{-\B E_4(\sfour)}} \prod_{\substack{L\notin P \\ L\cap P\neq \varnothing}}  \textcolor[rgb]{0,0,1}{ m_{L\to s=L\cap P}}(s) \nonumber
\end{eqnarray} 
\begin{eqnarray}
b_L(s,s') & =& \frac1 {z_L}e^{-\B E_2(s,s')} \prod_{\substack{ P\supset L}} m_{P\to L}(s,s') \label{eq:beliefL} \\ 
&& \phantom{e^{-\B E_2(s,s')}} \prod_{s\in L} \prod_{\substack{ P' \cap L = s}} \textcolor[rgb]{1,0,0}{ m_{P'\to s}(s)} \nonumber\\
&& \phantom{e^{-\B E_2(s,s')}} \prod_{\substack{L' \neq L \\ L'\cap L\neq \varnothing}} \textcolor[rgb]{0,0,1}{ m_{L'\to s=L'\cap L}(s)} \nonumber
\end{eqnarray} 
\begin{eqnarray}
b_s(s) & =& \frac1 {z_s} e^{-\B E_1(s)} \prod_{P \supset s}\textcolor[rgb]{1,0,0}{ m_{P\to s}(s)} \label{eq:beliefs}\\
&& \phantom{e^{-\B E_2(s,s')}} \prod_{L \supset s}  \textcolor[rgb]{0,0,1}{m_{L\to s}(s) }\nonumber
\end{eqnarray} 

Graphically, the beliefs of each region are given by 

\begin{pspicture}(7,3) 
%\psgrid
%\psframe(3,2)(3.3,3) 
%\rput(2,2.5){First Example}
\psline[linewidth=.05cm]{o-o}(1,1)(2,1)
\psline[linewidth=.05cm]{o-o}(2,1)(2,2)
\psline[linewidth=.05cm]{o-o}(2,2)(1,2)
\psline[linewidth=.05cm]{o-o}(1,2)(1,1)
\psline[linewidth=.05cm]{o-o}(4,1)(4,2)
\pscircle[linewidth=0.05cm](6,2){0.14}

\psline[linewidth=.05cm,doubleline=true]{->}(0.2,1.5)(0.9,1.5)
\psline[linewidth=.05cm,doubleline=true]{->}(2.8,1.5)(2.1,1.5)
\psline[linewidth=.05cm,doubleline=true]{->}(1.5,2.8)(1.5,2.1)
\psline[linewidth=.05cm,doubleline=true]{->}(1.5,0.2)(1.5,0.9)

\psline[linewidth=.05cm,linecolor=red]{->}(0.3,2.6)(0.9,2.1)
\psline[linewidth=.05cm,linecolor=blue]{->}(0.3,2.0)(0.9,2.0)
\psline[linewidth=.05cm,linecolor=blue]{->}(1,2.6)(1,2.1)

\psline[linewidth=.05cm,linecolor=red]{->}(2.7,2.6)(2.1,2.1)
\psline[linewidth=.05cm,linecolor=blue]{->}(2.7,2)(2.1,2)
\psline[linewidth=.05cm,linecolor=blue]{->}(2,2.6)(2,2.1)

\psline[linewidth=.05cm,linecolor=red]{->}(0.3,0.4)(0.9,0.9)
\psline[linewidth=.05cm,linecolor=blue]{->}(0.3,1)(0.9,1)
\psline[linewidth=.05cm,linecolor=blue]{->}(1,0.4)(1,0.9)

\psline[linewidth=.05cm,linecolor=red]{->}(2.7,0.4)(2.1,0.9)
\psline[linewidth=.05cm,linecolor=blue]{->}(2.7,1)(2.1,1)
\psline[linewidth=.05cm,linecolor=blue]{->}(2,0.4)(2,0.9)

\psline[linewidth=.05cm,doubleline=true]{->}(3.2,1.5)(3.9,1.5)
\psline[linewidth=.05cm,doubleline=true]{->}(4.8,1.5)(4.1,1.5)

\psline[linewidth=.05cm,linecolor=red]{->}(3.3,2.6)(3.9,2.1)
\psline[linewidth=.05cm,linecolor=blue]{->}(4.7,2)(4.1,2)
\psline[linewidth=.05cm,linecolor=blue]{->}(3.3,2)(3.9,2)
\psline[linewidth=.05cm,linecolor=blue]{->}(4,2.6)(4,2.1)

\psline[linewidth=.05cm,linecolor=red]{->}(4.7,2.6)(4.1,2.1)
\psline[linewidth=.05cm,linecolor=red]{->}(3.3,0.4)(3.9,0.9)
\psline[linewidth=.05cm,linecolor=red]{->}(4.7,0.4)(4.1,0.9)
\psline[linewidth=.05cm,linecolor=blue]{->}(4.7,1)(4.1,1)
\psline[linewidth=.05cm,linecolor=blue]{->}(3.3,1)(3.9,1)
\psline[linewidth=.05cm,linecolor=blue]{->}(4,0.3)(4,0.9)

\psline[linewidth=.05cm,linecolor=red]{->}(5.3,2.6)(5.9,2.1)
\psline[linewidth=.05cm,linecolor=red]{->}(6.7,2.6)(6.1,2.1)
\psline[linewidth=.05cm,linecolor=red]{->}(5.3,1.4)(5.9,1.9)
\psline[linewidth=.05cm,linecolor=red]{->}(6.7,1.4)(6.1,1.9)
\psline[linewidth=.05cm,linecolor=blue]{->}(6.7,2)(6.1,2)
\psline[linewidth=.05cm,linecolor=blue]{->}(6,1.3)(6,1.9)
\psline[linewidth=.05cm,linecolor=blue]{->}(5.3,2)(5.9,2)
\psline[linewidth=.05cm,linecolor=blue]{->}(6,2.6)(6,2.1)
\end{pspicture}
where double arrows represent messages to links $m_{P\to L}(s,s')$, oblique arrows messages from plaquettes to spins $\textcolor[rgb]{1,0,0}{m_{P\to s}(s)}$ and remaining arrows messages from links to spins $\textcolor[rgb]{0,0,1}{m_{L\to s}(s)}$. Colors have been added  (online version)  to help identify each arrow with its corresponding term.

\subsubsection{Message passing}
Message passing equations can be obtained in two different but equivalent ways:
\begin{itemize}
 \item Old way: by imposing the consistency among beliefs, in this case some of the following:
 \begin{eqnarray*}
 b_L(s_1,s_2) &=& \sum_{s_3,s_4} b_P(s_1,s_2,s_3,s_4) \\   
 b_s(s_1) &=& \sum_{s_2,s_3,s_4} b_P(s_1,s_2,s_3,s_4) \\   
 b_s(s_1) &=& \sum_{s_2} b_L(s_1,s_2) \\   
 \end{eqnarray*}
 \item New way: by imposing consistency among the moments of the distributions:
 \begin{eqnarray*}
 \sum_{s_1,s_2} s_1 s_2 \: b_L(s_1,s_2) &=&  \sum_{s_1,s_2,s_3,s_4} s_1 s_2 \: b_P(s_1,s_2,s_3,s_4) \\   
 \sum_{s_1} s_1 \: b_s(s_1) &=& \sum_{s_1, s_2,s_3,s_4} s_1 \: b_P(s_1,s_2,s_3,s_4) \\   
 \sum_{s_1}s_1\: b(s_1) &=& \sum_{s_2} s_1 \: b_L(s_1,s_2) \\   
 \end{eqnarray*}
\end{itemize}
Furthermore, as can be easily checked not all three equations in the old way are independent:
the third equation is consequence of the first two. This is the very reason why we reduced the amount of fields. 
In the new way there is only three values being fixed, and they are all independent. 
Both ways, however, produce the same update equations (message passing)
independently of whether the messages has been reduced to be gauge fixed, or are in full generality.
%Despite this redundancy, the programmer is free to implement message passing using all of them or only some (a complete subset). %In any case, the quations
%will take belief distributions towards the same consistency and fixed points, although they might differ in convergence properties.

For instance, forcing any link belief $b_L(s_1,s_2)$ to marginalize onto one of its spins
results in the following equation:
\begin{multline}
\sum_{s_2} \:e^{\B J s_1 s_2 } m_{P_1\to L}(s_1,s_2) m_{P_2\to L}(s_1,s_2)\\
m_{P_3\to s_2}(s_2) m_{P_4\to s_2}(s_2)  \prod_{\substack{L' \supset s_2\\ L'\neq L}} m_{L'\to s_2}(s_2)\\
\propto m_{L \to s_1}(s_1)\prod_{P \supset L} m_{P \to s_1}(s_1)
\label{eq:Ltosmm}
\end{multline}
where we put a sign of proportionality $\propto$ instead of equality since messages are undefined by a multiplicative constant.
These equations can be derived graphically using the representations of the beliefs and messages introduced above.
The rules are quite simple. Interactions are represented by the rods, degrees of freedom by the circles, and messages by the arrows. If an 
interaction or a message appears in both sides of the equations, can be cancel out. The degrees of freedom over which the marginalization 
is carried appear as full black circles. For instance, equation (\ref{eq:Ltosmm}) is represented as in figure \ref{fig:link-to-spin}.

\begin{figure}[h]
\begin{center}
\begin{pspicture}(7,3)
%\psgrid
%\psframe(3,2)(3.3,3) 
%\rput(2,2.5){First Example}

\psline[linewidth=.05cm]{*-o}(4,1)(4,2)
\pscircle[linewidth=0.05cm](6,2){0.12}

\psline[linewidth=.03cm]{-}(5,1.9)(5.3,1.9)
\psline[linewidth=.03cm]{-}(5,1.8)(5.3,1.8)

\psline[linewidth=.05cm,doubleline=true]{->}(3.2,1.5)(3.9,1.5)
\psline[linewidth=.05cm,doubleline=true]{->}(4.8,1.5)(4.1,1.5)

\psline[linewidth=.05cm]{->}(3.3,0.4)(3.9,0.9)
\psline[linewidth=.05cm]{->}(4.7,0.4)(4.1,0.9)

\psline[linewidth=.05cm]{->}(5.3,1.4)(5.9,1.9)
\psline[linewidth=.05cm]{->}(6.7,1.4)(6.1,1.9)

\psline[linewidth=.05cm]{->}(3.3,0.4)(3.9,0.9)
\psline[linewidth=.05cm]{->}(4.7,0.4)(4.1,0.9)
\psline[linewidth=.05cm]{->}(4.7,1)(4.1,1)
\psline[linewidth=.05cm]{->}(3.3,1)(3.9,1)
\psline[linewidth=.05cm]{->}(4,0.3)(4,0.9)

\psline[linewidth=.05cm]{->}(6,1.3)(6,1.9)
\end{pspicture}
\end{center}
\caption{Consistency equation between link beliefs and spin beliefs. Mathematically it corresponds to 
the first two equations in (\ref{eq:Uuu}) for $d=2$.} \label{fig:link-to-spin}
\end{figure}
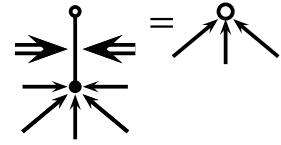

The plaquette to link marginalization produces consistency relation between messages as shown in figure \ref{fig:plaqu-to-link}.

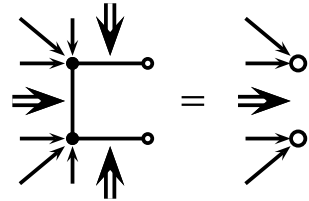
\begin{figure}[h]
\begin{center}
\begin{pspicture}(7,3) 
\psline[linewidth=.05cm]{*-o}(1,1)(2,1)
\psline[linewidth=.05cm]{o-o}(2,2)(1,2)
\psline[linewidth=.05cm]{*-*}(1,2)(1,1)

\psline[linewidth=.05cm,doubleline=true]{->}(0.2,1.5)(0.9,1.5)
\psline[linewidth=.05cm,doubleline=true]{->}(1.5,2.8)(1.5,2.1)
\psline[linewidth=.05cm,doubleline=true]{->}(1.5,0.2)(1.5,0.9)

\psline[linewidth=.03cm]{-}(2.45,1.55)(2.75,1.55)
\psline[linewidth=.03cm]{-}(2.45,1.45)(2.75,1.45)

\psline[linewidth=.05cm]{->}(0.3,2.6)(0.9,2.1)
\psline[linewidth=.05cm]{->}(0.3,2.0)(0.9,2.0)
\psline[linewidth=.05cm]{->}(1,2.6)(1,2.1)

\psline[linewidth=.05cm]{->}(0.3,0.4)(0.9,0.9)
\psline[linewidth=.05cm]{->}(0.3,1)(0.9,1)
\psline[linewidth=.05cm]{->}(1,0.4)(1,0.9)

\pscircle[linewidth=0.05cm](4,2){0.12}
\pscircle[linewidth=0.05cm](4,1){0.12}
\psline[linewidth=.05cm,doubleline=true]{->}(3.2,1.5)(3.9,1.5)

\psline[linewidth=.05cm]{->}(3.3,2.6)(3.9,2.1)
\psline[linewidth=.05cm]{->}(3.3,2)(3.9,2)

\psline[linewidth=.05cm]{->}(3.3,0.4)(3.9,0.9)
\psline[linewidth=.05cm]{->}(3.3,1)(3.9,1)
\end{pspicture}
\end{center}
\caption{Consistency equation between plaquette beliefs and link beliefs. Mathematically it corresponds to 
the first two equations in (\ref{eq:Uuu}) for $d=2$.
%{\bf FRT: non capisco perch\'e per questo diagramma hai fatto una figura e per gli altri no. Forse mi manterrei coerente se non vi \`e un motivo che ci forza all'incoerenza.}
}\label{fig:plaqu-to-link}
\end{figure}

As can be seen, in either cases (link to spin and plaquette to spin) the messages in the right hand side do not appear isolated. 
Consistency equations force the product of messages. We could have used plaquette to spin marginalization as well, and the situation 
still would be similar. In such cases, it is left to the programmer to decide which iterative updating rule she wishes to implement 
to solve the consistency equations in a message passing way. She could, for instance, use the link to spin equation to update both 
plaquette to spin messages in a symmetric way, and then use the plaquette to spin equation to update the plaquette to link message. 
Let us emphasize that this freedom on the implemenation of message passing equations remains even when the gauge is fixed,
just as any fixed point equation can be written in infinte many ways. The gauge fixed property refers to the unicity of fields values 
at a given fixed point, not to the strategies to find them.
%We will get into this details when we analyze single instances of disordered models in section \ref{sec:2DEA}.

If messages are considered in full generality, then we have a redundant description
\begin{eqnarray*}
m_{P \to L} (s_1,s_2) &=& e^{ U_{P \to L} s_1 s_2 + u_{P \to 1} s_1 + u_{P \to 2} s_2} \\
m_{L \to 1} (s_1) &=& e^{ u_{L \to 1} s_1} \\
m_{L \to 2} (s_2) &=& e^{ u_{L \to 2} s_2} 
\end{eqnarray*}
leading to a gauge invariance transformation involving $u$ messages \cite{GBPGF}. On the contrary, using the gauge-free moment matching prescription previously defined in Def.~\ref{def:moment_matching}, messages are
\begin{eqnarray*}
m_{P \to L} (s_1,s_2)&=& e^{ U_{P\to L} s_1 s_2} \\
m_{P \to 1} (s_1) &=& e^{u_{P\to 1} s_1} \\
m_{L \to 1} (s_1) &=& e^{u_{L\to 1} s_1}
\end{eqnarray*}
Details of the update equations and an example on 2D single instance is given next.

%Next we show how to compute critical temperatures in this approximation. The procedure works as well for the gauge invariant and gauge free messages, since in the ferromagnetic model that we explore, the assumption that all messages of the same type are equal (due to the system spatial homogeneity) fixes the gauge. 
%Eventually, when treating non homogeneous systems, fixing the gauge is relevant, and therefore we will attach to the gauge-free messages.

\subsection{2D Single instance implementation}

%There is a lot of freedom in the implementation of message passing equations. Since all types of fields present in the plaquette approximation 
%are already present in the right hand side of the plaquette to link marginalization condition (\ref{fig:plaqu-to-link}), 
%we can use it to update all of them and arrive 
%to the fixed point. The need for only one message passing equation is not a feature of 4M-CVM approach: also standard GBP can 
%be implemented with only one type of message passing, though is not usually done in this way.

The self consistent message passing equations can be written as $ \textcolor[rgb]{0,0,1}{u_L} = \hatu_L(\B,J,u_L,U,u_P)$ and 
$U = \hatU(\B,J,u_L,U,u_P) $,  $\textcolor[rgb]{1,0,0}{u_P} = \hatup(\B,J,u_L,U,u_P) $, where 
\begin{eqnarray}
\hatu_L& =&   \frac 1 2 \log(K(1)/K(-1)) - u_{P_a} - u_{P_b}  \nonumber \\
\hatup& = & \frac 1  4  \log\left(\frac{K(1,1)K(1,-1)}{K(-1,1)K(-1,-1)}\right) - u_{L} \nonumber \\
\hatU & = & \frac 1  4 \log\left(\frac{K(1,1)K(-1,-1)}{K(1,-1)K(-1,1)}\right)  \label{eq:uuU} 
\end{eqnarray}
The $K(\cdot)$ terms are partial traces over the spins in the plaquete and link, given by:
\begin{widetext}
\begin{eqnarray}
K(s_1) &=& \sum_{s_2} e^{    (\B J_{12} + U_1 + U_2)) s_1 s_2 + ( u_{P_1} +  u_{P_2} +  u_{L1} +u_{L2} +u_{L3}) s_2 } \nonumber \\
 K(s_1,s_2)&=& \sum_{s_3,s_4} \exp \left[ s_2s_3 (\B J_{23}+ U_{23})+s_1 s_4 (\B J_{14}+U_{14})+ s_3 s_4 (\B J_{34}+U_{34})  +  \right. \nonumber\\
&&\left. \phantom{ \sum_{s_3,s_4} } +\big( u_{P_4} +  u_{L_1\to4} +  u_{L_2\to4}\big) s_4  +  \big(u_{P_3} +  u_{L_1\to3} +  u_{L_2\to3}\big) s_3  \right]  \nonumber
 \end{eqnarray}
\end{widetext}
in correspondence with the fields in the left hand sides of diagrams \ref{fig:link-to-spin} and \ref{fig:plaqu-to-link}. 

\begin{figure}[!htb]
  \includegraphics[width=0.33\textwidth,angle=270]{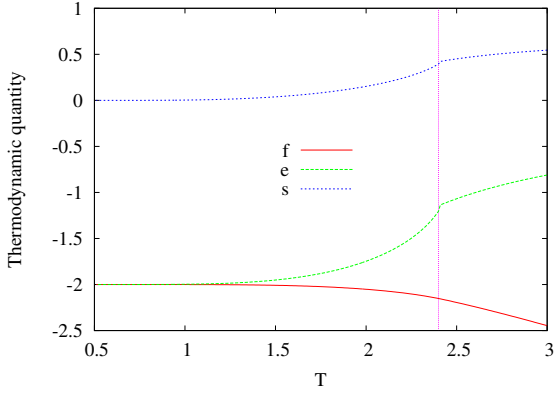}
  \caption{ Estimated intensive thermodynamic quantities for the Ising 2D 
  model using gauge-free message passing.\label{fig:thermo_ferro}}
\end{figure}

The implementation of the message passing is carried by randomly selecting a plaquette (or link) and updating their fields as 
prescribed by equations (\ref{eq:uuU}).
In 2D Ising model we obtain the expected results (see Fig. \ref{fig:thermo_ferro}). 
Above the approximation critical temperature (not exact) $T_c =1/\beta_c \simeq 2.43$ all fields acting on single spins are 
zero $u_{L \to i} = u_{P\to i}=0$, and the system is in a paramagnetic phase 
with zero global magnetization. In this range, the only non zero field is 
the correlation field $U_{P\to L}$ (a detailed studied of this phase is in \cite{dual}). Below $T_c$, the system is in a 
ferromagnetic phase, with non zero fields over spins and local as well 
as global magnetization.

Next we show how to generalize this method to compute the critical temperature of the Ising model in general dimension, under the plaquette-CVM approximation.

\subsection{Critical temperature for Ising d-dimensional}

Let us focus on the case of the plaquette CVM approximation in the general d-dimensional Ising
model on the hypercubic lattice. This case includes  the model of the previous section.

We will show how to obtain analytic expression for the critical temperature of the ferromagnetic model in this approximation at all 
dimensions, and furthermore, we will show that the asymptotic behavior is correctly until the third order in $1/d$, therefore being equivalent to the loop corrections of  \cite{loop_corrected_tom}.

The plaquette approximation is the one that uses plaquettes as the biggest regions. In such case,
the counting numbers of the plaquettes are always $c_P=1$. Every link belongs to $2(d-1)$ plaquettes, and 
therefore its counting number is $c_L = 1-2(d-1)$. Every spin belongs to $2d$ links and $2d(d-1)$ plaquettes 
and have counting number $c_s = 1-2d(d-1) -2d [1-2(d-1)] =1-2 d(2-d)$. Beliefs, therefore, have the following 
schematic representation, where, as usual, double arrows are messages from plaquette to link, oblique arrows 
from plaquette to spin and remaining (vertical and horizontal) arrows are from links to spins.

\begin{pspicture}(7,4) 
\psline[linewidth=.05cm]{o-o}(1,1)(2,1)
\psline[linewidth=.05cm]{o-o}(2,1)(2,2)
\psline[linewidth=.05cm]{o-o}(2,2)(1,2)
\psline[linewidth=.05cm]{o-o}(1,2)(1,1)
\psline[linewidth=.05cm]{o-o}(4,1)(4,2)
\pscircle[linewidth=0.05cm](6,2){0.14}

%\psline[linewidth=.05cm,doubleline=true]{->}(0.2,1.5)(0.9,1.5)
%\psline[linewidth=.05cm,doubleline=true]{->}(2.8,1.5)(2.1,1.5)

\psline[linewidth=.05cm,doubleline=true]{->}(1.5,2.8)(1.5,2.1)
\rput(1.5,3.0){$2(d-1)-1$}

%\psline[linewidth=.05cm,doubleline=true]{->}(1.5,0.2)(1.5,0.9)

%\psline[linewidth=.05cm]{->}(0.3,2.6)(0.9,2.1)
%\psline[linewidth=.05cm]{->}(2.7,2.6)(2.1,2.1)
\psline[linewidth=.05cm]{->}(1,0.3)(1,0.9)
\rput(1.0,0.15){$2d-2$}

\psline[linewidth=.05cm]{->}(2.7,0.4)(2.1,0.9)
\rput(3.7,0.15){$2d(d-1)-4d+5$}

\psline[linewidth=.05cm,doubleline=true]{->}(3.2,1.5)(3.9,1.5)
\rput(3.2,1.1){$2(d-1)$}
\psline[linewidth=.05cm]{->}(4.7,1)(4.1,1)
\rput(4.8,0.7){$2d-1$}

\psline[linewidth=.05cm]{->}(3.3,2.6)(3.9,2.1)
\rput(4.0,2.9){$2(d-1)^2$}

%\psline[linewidth=.05cm]{->}(5.3,2.6)(5.9,2.1)
\psline[linewidth=.05cm]{->}(6.7,2.6)(6.1,2.1)
\rput(5.7,2.7){$2d (d-1)$}
\psline[linewidth=.05cm]{->}(6,1.3)(6,1.9)
\rput(6.2,1){$2d$}

\end{pspicture}

\noindent For clarity,  only one type of message of each type is represented in each region together with the number of such messages that enter in 
the belief equation of that region. However, the reader should keep in mind that, for instance, there are $2d(d-1)-4d+5$ plaquette-to-spin fields entering
at very corner of the represented plaquette.

In general, consistency equations for messages keep the same structure represented graphically in the previous section, but only the 
amount of messages entering every region changes. Exploiting the isotropy of the model, we can look for fixed points in which all messages 
are the same. In other words, we will assume that all link to spin messages are characterized by a unique field $u_L$, while plaquette to 
link messages by the field $U$   and plaquette to spin messages by $u_P$.

Graphically, the updating equations for the messages have the following representation:

\begin{pspicture}(7,3) 
%\psgrid
%\psframe(3,2)(3.3,3) 
%\rput(2,2.5){First Example}

\psline[linewidth=.05cm]{*-o}(4,1)(4,2)
\pscircle[linewidth=0.05cm](6,2){0.14}

\psline[linewidth=.03cm]{-}(4.6,1.9)(4.9,1.9)
\psline[linewidth=.03cm]{-}(4.6,1.8)(4.9,1.8)

\psline[linewidth=.05cm,doubleline=true]{->}(3.2,1.5)(3.9,1.5)
\rput(2.3,1.5){$2(d-1)$}

%\psline[linewidth=.05cm]{->}(3.3,2.6)(3.9,2.1)
%\psline[linewidth=.05cm]{->}(4.7,2.6)(4.1,2.1)
\psline[linewidth=.05cm]{->}(3.3,0.4)(3.9,0.9)
\rput(2.4,0.4){$2(d-1)^2$}
\psline[linewidth=.05cm]{->}(4,0.4)(4.0,0.9)
\rput(4.0,0.2){$2d-1$}

%\psline[linewidth=.05cm]{->}(5.3,2.6)(5.9,2.1)
\psline[linewidth=.05cm]{->}(6,1.3)(6,1.9)
\rput(6,1.1){$1$}
\psline[linewidth=.05cm]{->}(5.3,1.4)(5.9,1.9)
\rput(5.0,1.1){$2(d-1)$}

\end{pspicture}

\begin{pspicture}(7,4) 
%\psgrid
%\psframe(3,2)(3.3,3) 
%\rput(2,2.5){First Example}
\psline[linewidth=.05cm]{*-o}(3,1)(4,1)
\psline[linewidth=.05cm]{o-*}(4,2)(3,2)
\psline[linewidth=.05cm]{*-*}(3,2)(3,1)
\rput(1.1,1.5){$2(d-1)-1$}

\psline[linewidth=.05cm]{->}(5.3,1)(5.9,1)
\psline[linewidth=.05cm]{->}(5.3,2)(5.9,2)

\psline[linewidth=.05cm]{->}(2.3,1)(2.9,1)
\psline[linewidth=.05cm]{->}(2.3,2)(2.9,2)
\rput(1.6,1){$2d-2$}

\pscircle[linewidth=0.03cm](6,1){0.1}
\pscircle[linewidth=0.03cm](6,2){0.1}

\psline[linewidth=.03cm]{-}(4.5,1.55)(4.8,1.55)
\psline[linewidth=.03cm]{-}(4.5,1.45)(4.8,1.45)
\psline[linewidth=.05cm,doubleline=true]{->}(2.2,1.5)(2.9,1.5)
\psline[linewidth=.05cm,doubleline=true]{->}(3.5,2.8)(3.5,2.1)
\psline[linewidth=.05cm,doubleline=true]{->}(3.5,0.2)(3.5,0.9)
\rput(3.5,0){$2(d-1)-1$}
\psline[linewidth=.05cm]{->}(2.3,2.6)(2.9,2.1)
\rput(1.8,2.9){$2d(d-1)-4d+5$}
\psline[linewidth=.05cm]{->}(2.3,0.4)(2.9,0.9)
\rput(5.0,1.5){$1$}
\rput(5.0,2.0){$1$}
\psline[linewidth=.05cm,doubleline=true]{->}(5.2,1.5)(5.9,1.5)
\rput(5.0,2.9){$2d-3$}
\psline[linewidth=.05cm]{->}(5.3,2.6)(5.9,2.1)
\psline[linewidth=.05cm]{->}(5.3,0.4)(5.9,0.9)
\end{pspicture}

\noindent Exploiting the symmetry of the problem, we gain some clarity by showing only the multiplicity of one representative of each type of fields. 
In the second diagram, fields not having a number by its side, have the same multiplicity that is represented for its equivalent by a reflection 
along the horizontal axis.

Let us define 
\begin{widetext}
\begin{eqnarray}
K(s_1) &=& \sum_{s_2} e^{    (\B J + (2 (d - 1) U)) s_1 s_2 + (2 (d - 1)^2 u_P + (2 d - 1) u_L) s_2 } \nonumber \\
 K(s_1,s_2)&=& \sum_{s_3,s_4} \exp \left[ s_2s_3 (\B J+(2d-3)U)+s_1 s_4 (\B J+(2d-3)U)+ s_3 s_4 (\B J+(2d-3)U)  +  \right. \nonumber\\
&&\left. \phantom{ \sum_{s_3,s_4} } +\big((2 d (d - 1) - 4 d + 5) u_P + (2d-2 ) u_L\big) s_4  +  \big((2 d (d - 1) - 4 d + 5) u_P +  (2d-2 ) u_L\big) s_3  \right]  \nonumber
 \end{eqnarray}
\end{widetext}
In terms of this, the self consistent equations can be written as 
\begin{eqnarray}
U &=& \hatU(\B,J,u_L,U,u_P) \nonumber\\
  &=& \frac 1  4 \log\left(\frac{K(1,1)K(-1,-1)}{K(1,-1)K(-1,1)}\right) \nonumber \\
u_P &=& \hatup(\B,J,u_L,U,u_P) \label{eq:Uuu}\\
  &=& \frac 1 {2d-3} \left[\frac 1  4  \log\left(\frac{K(1,1)K(1,-1)}{K(-1,1)K(-1,-1)}\right) - u_L\right] \nonumber\\
u_L &=& \hatu_L(\B,J,u_L,U,u_P) \nonumber\\
  &=& \frac 1 2 \log(K(1)/K(-1)) - 2(d-1) u_P\:. \nonumber 
\end{eqnarray}
The solution to this set of equations is to be found numerically in general. A simpler case is that of the high temperatures, in which we 
suppose a paramagnetic phase characterized by $u_L=u_P=0$ and $U\neq 0$. In such case the equation $U= \hatU$ becomes the simpler
\begin{equation}
U= \arctanh \Big[ \Big(\tanh\big( (2d-3) U + \B J\big) \Big)^3\Big] \:.\label{eq:dual}
\end{equation}
This corresponds to the case treated in  \cite{dual}.

Moreover the paramagnetic solution is the starting point to obtain the critical temperature of the system as the instability of the 
paramagnetic solution. Taking
\[ \matstab(\B)  = \left(\begin{array}{cc} 
1-\frac {\partial \hatu_L } {\partial u_L} &\frac {\partial \hatu_L } {\partial  u_P} \vspace{0.3cm} \\
\frac {\partial \hatup } {\partial u_L} &1-\frac {\partial \hatup } {\partial u_P}
\end{array} \right)_{u_L=0,U=\hatU,u_P=0}
\]
a continuous instability appears at the point in which $\matstab(\B) $ is singular, and therefore the critical temperature is defined as 
\[\det \matstab (\B_c)  = 0 \:.\]
Note that this is not fully analytical at this point, since the numerical solution of (\ref{eq:dual}) is still needed. However, after some transformations we obtain an analytic expression for the critical temperature at all dimensions $d>2$:
\begin{equation*}
\B_{\text{\tiny CVM}} = \frac 1 2 \log \left[ \displaystyle \left(\frac{d-2}{d}\right)^{d-2} \left(\frac {2d-1}{2d-3}\right)^{2d-3} \right] \:.
\end{equation*}
In the $d=2$ case, the solution is also analytical but given by
\[
\B_{\text{\tiny CVM}}(d=2) = \frac 1{2 } \log\left( \frac{5+\sqrt{17}}{4} \right) \:.
\]
This prediction can be compared with known results. In table~\ref{tab:loop} we show the best estimate of the true $\beta_c$ on a regular hypercubic lattice with $2\le d\le6$, together with the estimate from plaquette CVM, that from the Bethe approximation, where $\B_{\text{\scriptsize{Bethe}}} = \atanh[(2d-1)^{-1}]$, and the one from Bethe with loop corrections due to Rizzo and Montanari \cite{loop_corrected_tom}. In the latter approximation the critical temperature can be computed only if $d>2$.

\begin{table}
\begin{tabular}{l|l|l|l|l}
 &  & plaquette & loop corr &  \\
$d$ & \ \ true $\beta_c$ & CVM & Bethe & Bethe  \\
\hline
2 & 0.440687 (exact)                           & 0.412258 & ---      & 0.346574 \\
3 & 0.221654(6)  \cite{barber_ising_3d}        & 0.216932 & 0.238520 & 0.202733 \\
4 & 0.14966(3)  \cite{gaunt1979susceptibility} & 0.148033 & 0.151650 & 0.143841 \\
5 & 0.11388(3)  \cite{parisi_lorenzo_5d}       & 0.113362 & 0.114356 & 0.111572 \\
6 & ---                                        & 0.092088 & 0.092446 & 0.091161 \\
\end{tabular}
\caption{Inverse critical temperatures of the Ising model on a regular hypercubic lattice in $d$ dimensions. In the second column we report the best estimate for the true $\beta_c$, while the other columns contain the inverse critical temperatures in 3 different mean field approximations: the plaquette CVM discussed in this work, the loop corrected Bethe of Ref.~\cite{loop_corrected_tom}, that can be computed only for $d>2$, and the standard Bethe approximation.\label{tab:loop}}
\end{table}

In the large $d$ limit, the plaquette-CVM critical temperature is correct up to the second order in the $1/d$ expansion, exactly as the loop corrected Bethe approximation \cite{loop_corrected_tom}
\[
\frac{1}{2 d \B_{\text{\tiny CVM}}} = \rlap{$\overbrace{ \phantom{1-\frac 1 {2d} }}^{\text{Bethe}}$}  \underbrace{1-\frac 1 {2d} -\frac 1 {3d^2}}_{\text{Loop corr. Bethe}} - \frac 5 {12 d^3} +\ldots
\]
while the standard Bethe approximation is correct only up to the $O(1/d)$ term.

We find this result very interesting, because it improves over the Bethe approximation by one order of magnitude in the $1/d$ 
expansion, while still providing a very accurate critical temperature at $d=2$.
On the contrary the loop corrected Bethe approximation is divergent in $d=2$, and this makes the 
present 4M-CVM much more useful for the study of low dimensional systems.

%\begin{figure}[!htb]
%  \includegraphics[width=0.33\textwidth,angle=270]{}
%  \caption{ Critical temperatures in different approximations compared to the first two leading terms in the high dimension expansion of 
%  the Ising model. \label{fig:critical_temp}}
%\end{figure}

%\subsection{Single instance implementation}

%\section{Example 3: plaquette-CVM for Edwards Anderson model in 3D}
%\label{sec:2DEA}

%Message passing. Convergence properties.

%Convergence problems might arise in the message passing, even in the 
%ferromagnetic model (non disordered Ising model). The update rule for the 
%fields (messages) can be taken in many different ways. We used the 
%following EDUARDO WRITE THIS.

\section{Conclusions}

We have shown how to create gauge-free message passing implementations of the cluster variational approximations for general models of spin-like variables. 
To do so we presented a new 
way of introducing the messages in the CVM that differs from standard parent-to-child messages in that messages are sent to a region
from all its ancestors, and not only by its direct parents. While previous attempts to fix the guage invariance in GBP equations \cite{zhou_redundant,Aurelien16,Pakzad2005Kikuchi} relied on the idea of removing some selected messages from the equations, our approach 
increases the number of such messages, but with a restriction on their degrees of freedom.

%As a consequence, in the belief expression of a given region, appear the messages from all its ancestors and the ancestors of regions within it. We called this top down approach, maximal messages since each region gets all possible messages.

This systematic restriction of messages degrees  of freedom automatically produces  gauge-free variational approximations, such that
there is a one-to-one correspondence between free energy minima, and the values of the fields that define the messages. 
Furthermore, we put emphasis in a new interpretation of the fields involved in the message passing as imposing consistency between moments of the 
local distributions (beliefs) rather 
than the usual interpretation of messages forcing consistent beliefs marginalization. 
We called the resulting method, maximal messages with moment matching (4M-CVM).

The approach includes the Bethe approximation as the starting point, and improves it when larger regions are taken in consideration.
We showed  that the method  produces sensible analytical results for the plaquette approximation of the critical temperature of the 
Ising ferromagnet in general dimensions, that correctly accounts for the next leading order in the high dimensional expansions, just as the more complicated 
loop calculus does \cite{loop_corrected_tom}.

%The next step would be to try such approximations on disordered systems, comparing it to previous approaches. Another direction of development is using  this CVM approximation method on developing 1rsb calculus in disordered models, a task that so far has been formally tackled but with elusive numerical applications.

\section{Acknowledgement} 

The authors want to thank the hospitality of Erik Aurell and his group at Nordita, during the last part of the elaboration of this paper. We also thank 
Daniel de la Regata for support with 3D graphics. This work has been supported in part by the ERC under the European Union’s 7th Framework Programme Grant Agreement 307087-SPARCS.

\bibliography{bibliografia}

\appendix

\section{Maximal GBP is always valid } \label{ap:mmmmvalid}

We will prove that independently of the regions chosen as maximal, the introduction of multiplicative messages from maximal 
regions to their children, as explained in the main text, generates a valid GBP.

Valid means that equations (\ref{eq:mc_r}) is satisfied.
First of all, let us prove that (\ref{eq:brgl}) satisfies equation (\ref{eq:mc_r}). Without loss of generality, let us focus on a 
given region $r_0\in R$ and one of its children regions $\alpha \subset r_0$. The message $m_{r_0\to\alpha}(\xg)$ appears in 
the beliefs equations of all regions $r$ such that $r_0\cap r = \alpha$, which defines:
\begin{eqnarray}
 \Bra \equiv  \partial \mra  &=& \{r' \in R|  r' \cap r = \alpha\} \nonumber
 \end{eqnarray} 
 An example of $\Bra$ are the regions (except $Q_1$) appearing in the diagrams of Fig. \ref{fig:cube3D_mQ1_s1}.
 
 The property we need to prove is (restating eq. (\ref{eq:mc_r}))
\begin{prop} \label{prop}
 \begin{equation}
  \forall_{r_0 \in R_0}\; \forall_{\substack{\alpha \in R\\ \alpha \subset r_0}} \:\: \sum_{r'\in \Bra } c_{r'} = 0 \label{eq:c_Bro} 
 \end{equation}
 \end{prop}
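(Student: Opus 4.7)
The plan is to derive the identity from the standard CVM counting-number property $\sum_{r \in R,\, r \supseteq \alpha} c_r = 1$, which is equivalent to equation~(\ref{eq:c_r}) and expresses the fact that every interaction/variable in region $\alpha$ is counted exactly once. I would fix $r_0 \in R_0$ and $\alpha \in R$ with $\alpha \subseteq r_0$, and then partition the set of ancestors of $\alpha$ (including $\alpha$ itself) according to their intersection with $r_0$.

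The key structural input is that $R$ is closed under intersections and $r_0 \in R$, so for every $r'$ with $\alpha \subseteq r'$, the set $\beta := r' \cap r_0$ is a region in $R$ satisfying $\alpha \subseteq \beta \subseteq r_0$. Hence, writing $S(\beta) := \sum_{r' \in B_{r_0,\beta}} c_{r'}$, the partition yields
\begin{equation*}
1 \;=\; \sum_{\substack{r' \in R \\ r' \supseteq \alpha}} c_{r'} \;=\; \sum_{\substack{\beta \in R \\ \alpha \subseteq \beta \subseteq r_0}} S(\beta).
\end{equation*}
This identity must hold for every $\alpha \in R$ with $\alpha \subseteq r_0$, which is exactly the hypothesis we need to invert.

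Next I would argue by downward induction on the poset interval $[\alpha, r_0] \cap R$. Base case $\alpha = r_0$: since $r_0$ is maximal, $B_{r_0,r_0} = \{r_0\}$ and $A_{r_0} = \varnothing$, giving $c_{r_0} = 1$ and therefore $S(r_0) = 1$. Inductive step: for $\alpha \subsetneq r_0$, assume $S(\beta) = 0$ for every $\beta \in R$ with $\alpha \subsetneq \beta \subsetneq r_0$. Plugging this into the displayed identity collapses the right-hand side to $S(\alpha) + S(r_0) = S(\alpha) + 1$, and comparing with the left-hand side $1$ forces $S(\alpha) = 0$. This is precisely the claim of Proposition~\ref{prop}.

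The conceptual work is therefore concentrated in two places. The first is recognising that the maximal-messages prescription produces exactly the \emph{partition} of ancestors of $\alpha$ by intersection class, which is what makes the sum over $B_{r_0,\alpha}$ computable from the global counting-number identity; this uses the closure of $R$ under intersections in an essential way. The second, and main obstacle, is ensuring the induction is well-posed on a general CVM poset: one has to be careful that the interval $[\alpha,r_0] \cap R$ is finite and that the decomposition $r' \cap r_0 \in R$ really does give a disjoint partition of the ancestors of $\alpha$ indexed by regions between $\alpha$ and $r_0$. Once both points are secured, the proof is essentially a one-line Möbius-type inversion on a finite partially ordered set.
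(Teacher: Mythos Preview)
Your argument is correct. It is also genuinely different from the paper's route, so a brief comparison is worthwhile.

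The paper splits $\Ac_\alpha = \Bra \sqcup \Braco$ with $\Braco = \{r \mid r\cap r_0 > \alpha\}$ and then proves directly that $\sum_{r\in\Braco} c_r = 1$. To do this it establishes several structural lemmas: $\Braco$ is closed under intersection with $r_0$; its minimal elements $\beta_1,\dots,\beta_K$ all lie below $r_0$; $\Braco$ equals the union of the extended ancestries $\Ac_{\beta_i}$; and any intersection of such ancestries is again an ancestry $\Ac_\eta$ of some $\eta\in\Braco$. Inclusion--exclusion over the $\Ac_{\beta_i}$ then collapses to $\sum_{k=1}^K(-1)^{k+1}\binom{K}{k}=1$.

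You instead refine the same splitting into the full partition $\Ac_\alpha=\bigsqcup_{\beta\in R,\,\alpha\subseteq\beta\subseteq r_0} B_{r_0,\beta}$ and observe that the counting-number identity, applied at \emph{every} $\alpha$ in the interval, gives the triangular system $\sum_{\beta\geq\alpha} S(\beta)=1$. Downward induction (equivalently, M\"obius inversion on the finite interval $[\alpha,r_0]\cap R$, which has the unique maximum $r_0$) immediately yields $S(r_0)=1$ and $S(\alpha)=0$ otherwise. This bypasses all the lemmas about minimal elements and intersections of ancestries, at the cost of invoking the counting-number identity at every intermediate $\beta$ rather than only once. Your proof is shorter and more transparent; the paper's approach, on the other hand, isolates the combinatorial structure of $\Braco$ explicitly, which it reuses in the later appendices. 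A small remark: your base case $S(r_0)=c_{r_0}=1$ actually follows from the counting identity itself applied at $\alpha=r_0$, so the maximality of $r_0$ is used only to guarantee that $r_0$ is the unique top of the interval (hence that the induction terminates there), not to compute $c_{r_0}$ directly.
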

This property is similar to the one defining the counting numbers eq. (\ref{eq:c_r}), but not the same. We will show the 
validity of (\ref{eq:c_Bro})  from that of the eq. (\ref{eq:c_r}).  Let us start by re-stating eq. (\ref{eq:c_r}) as
\begin{equation}
\sum_{r\in \Ac_\alpha} c_r = 1. \label{eq:c_rap}
\end{equation}
where we have defined $\Ac_\alpha$ the extended set of ancestors of any region $\alpha$ to include $\alpha$ itself
\[   \Ac_\alpha \equiv A_\alpha \cup \{\alpha\}
\]
The set $\Bra \subset \Ac_\alpha$. Furthermore, we can write:
\[ \Ac_\alpha = \Bra + \Braco
\]
where the sets in the right hand side are disjoint, and  
\begin{defi}{$\Braco $}
\[ \Braco = \{ r\in R | r\cap r_0 > \alpha\} 
\]
is the set of all ancestors of $\alpha$ such that their intersection with $r_0$ is larger than $\alpha$. 
\end{defi}

The part $\Braco$ contains the ancestors of $\alpha$ in whose beliefs the message $\mra$ does not appear, and 
it will never be an empty set, since it includes at least $r_0$ and its ancestry. In figure \ref{fig:cube3D_mQ1_s1} $\Braco$
is exactly the absent part with respect to Fig. \ref{fig:cube3D}, including also $Q_1$.

From now on we will use relational operators $>,<,\geq,\leq$ freely, since the hierarchy stablished by the inclusion of sets in the 
regions defines a partially ordered set. In this sense, $\alpha < \beta$ means that $\alpha \subset \beta$ and $\alpha \neq \beta$.

Since the sets $\Bra, \Braco$ are disjoint, the sum (\ref{eq:c_rap}) can be split among them becoming
\begin{equation}
\sum_{r\in \Bra} c_r +  \sum_{r\in \Braco} c_r = 1. \label{eq:c_BB}
\end{equation}
If $\Braco = \Acrc$, then proposition \ref{prop} is proved, since the second sum will equal 1. Otherwise, the validity of the maximal 
messages (eq. (\ref{eq:c_Bro})) falls from proving that in the most general case
\begin{equation}
 \sum_{r\in \Braco} c_r = 1, \label{eq:c_Braco}
\end{equation}
as we will through the rest of the appendix.

Let us see some properties of the set $\Braco$. Form now on, all greek letters $\alpha, \beta, \gamma \ldots$ refer to regions in $\Braco$.
\begin{lem} \label{lem:poset}
$\Braco$ is a finite partially ordered set.
\end{lem}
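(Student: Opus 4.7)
The statement to be proved is essentially a book-keeping fact: once we have established that the whole set $R$ of CVM regions is a finite partially ordered set under inclusion, every subset of $R$ inherits both properties. So my plan is simply to unpack the two assertions and show each separately.

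For finiteness, the plan is to trace back the construction of $R$: we start from a finite set $R_0$ of maximal regions and recursively add intersections $R_k = \{r_{k-1} \cap r_{k-1}' : r_{k-1},r_{k-1}' \in R_{k-1}\}$. Since each $R_k$ is a subset of the power set of the (finite) set of variables, the recursion stabilizes after finitely many steps and $R = R_0 \cup R_1 \cup \cdots$ is finite. The set $\Braco \subseteq R$ is therefore finite as well.

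For the partial order, I would observe that the inclusion relation $\subseteq$ is reflexive, antisymmetric and transitive on any collection of sets, and this is precisely the partial order already used on $R$ (as pointed out in the introduction of the CVM construction, where $R$ is described as a partially ordered set under subset inclusion, with $R_0$ as the set of maximal elements). Restricting this order to the subset $\Braco$ yields a partial order on $\Braco$. Thus $(\Braco, \subseteq)$ is a finite poset.

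I do not expect any serious obstacle here, because the lemma does not claim that $\Braco$ has a maximum or minimum, nor that it is a lattice; it only records the two structural properties that will be used as hypotheses in the combinatorial arguments on $\Braco$ that follow (leading up to \eqref{eq:c_Braco}). The only subtlety worth spelling out in the write-up is that $\Braco$ is nonempty whenever it is used, which is already noted in the main text (it always contains $r_0$ together with its ancestors), but nonemptiness is not strictly part of the present lemma.
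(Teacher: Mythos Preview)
Your proof is correct and follows exactly the same approach as the paper: since $\Braco \subset R$ and $R$ is a finite partially ordered set under inclusion, $\Braco$ inherits both properties. The paper's own proof is even terser, omitting your explicit justification of the finiteness of $R$, but the argument is identical.
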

\begin{proof}
Since $\Braco \subset R$, it is finite. Furthermore, the set of all regions $R$  itself is a partially ordered set, defined by the 
inclusion relation $r_1 < r_2 \Longleftrightarrow r_1\subset r_2 \wedge r_1 \neq r_2 \: \qedhere$. 
\end{proof}
From now on we will use the terminology of partially ordered sets. For instance, we will say that region $r_2$ covers $r_1$ if $r_2 > r_1$ 
and there is no $z$ such that $r_2 >z>r_1$.

\begin{lem} \label{lem:Bracoclosed}
$\Braco$ is closed under intersection with $r_0$.
\end{lem}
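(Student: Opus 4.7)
The plan is to verify the closure property directly by taking any $r\in\Braco$, setting $s = r\cap r_0$, and checking both clauses in the definition of $\Braco$, namely (i) $s\in R$ and (ii) $s\cap r_0 > \alpha$.

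For clause (i) I would invoke the closure of $R$ under intersection, which is built into the very definition of the CVM region set in Sec.~\ref{sec:CVM}: the recursion $R_k = \{r_{k-1}\cap r'_{k-1} \mid r_{k-1},r'_{k-1}\in R_{k-1}\}$ guarantees that whenever $r,r_0\in R$ one also has $r\cap r_0\in R$. Since $r_0\in R_0\subset R$ by hypothesis and $r\in\Braco\subset R$, this gives $s\in R$ immediately.

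For clause (ii) I would use the idempotence of set intersection, $s\cap r_0 = (r\cap r_0)\cap r_0 = r\cap r_0 = s$, together with the hypothesis $r\in\Braco$, which is exactly the statement $r\cap r_0 > \alpha$. Composing these two gives $s\cap r_0 = s = r\cap r_0 > \alpha$, which is what clause (ii) demands.

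I do not foresee a genuine obstacle here: the lemma is essentially a one-line consequence of the closure of $R$ under pairwise intersection together with $A\cap B\subseteq B$ and idempotence. The only thing to be careful about is that the ordering symbol $>$ on regions is the strict one (as introduced just above the lemma, where $\alpha<\beta$ means $\alpha\subsetneq\beta$); but this poses no issue since the strictness of $s>\alpha$ is inherited from the strict inequality $r\cap r_0 > \alpha$ already assumed in $r\in\Braco$.
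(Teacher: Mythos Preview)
Your proof is correct and essentially identical to the paper's: take $r\in\Braco$, set $\eta=r\cap r_0$, note $\eta\in R$ by closure of the CVM region set under intersections, and use idempotence $\eta\cap r_0=\eta=r\cap r_0>\alpha$ to conclude $\eta\in\Braco$. The only minor remark is that the recursion $R_k=\{r_{k-1}\cap r'_{k-1}\}$ by itself does not literally prove closure of $R$ under arbitrary pairwise intersections (it only gives same-level intersections), but the paper asserts this closure explicitly in Sec.~\ref{sec:CVM}, so citing that fact is enough.
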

\begin{proof}
The set of regions generated by the cluster variational method is closed under intersections in general. Let $\gamma \in \Braco$,
then $\gamma > \alpha$ and $\gamma \cap r_0 = \eta >\alpha$. Then, trivially, $\eta \cap r_0 = \eta > \alpha$ which guarantees 
that $\eta \in \Braco \:\qedhere$. 
\end{proof}
Since $\eta < r_0$, the following corollaries follows.
\begin{cor}
All $\beta_i \in \Braco$ such that $\beta_i$ covers $\alpha$ inside $\Braco$, are subsets of $r_0$ ($\beta_i < r_0$).
\end{cor}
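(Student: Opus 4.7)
The plan is short: this corollary follows directly from Lemma \ref{lem:Bracoclosed} combined with the definition of covering, without any new combinatorics. Let $\beta \in \Braco$ be a region that covers $\alpha$ inside $\Braco$, and introduce the auxiliary region $\eta := \beta \cap r_0$. By Lemma \ref{lem:Bracoclosed}, $\eta \in \Braco$, which by the very definition of $\Braco$ means $\eta > \alpha$ strictly. By construction $\eta \subseteq \beta$, so the partial order leaves only two possibilities for the relation between $\eta$ and $\beta$.

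Next I would rule out the case $\eta < \beta$ by a one-line contradiction: if $\eta$ were strictly smaller than $\beta$, then $\eta$ would be a member of $\Braco$ satisfying $\alpha < \eta < \beta$, contradicting the assumption that $\beta$ covers $\alpha$ inside $\Braco$. The only remaining alternative is $\eta = \beta$, i.e., $\beta = \beta \cap r_0$, which is precisely the statement $\beta \subseteq r_0$. Up to this point the argument is a direct two-step application of the previous lemma.

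The only delicate point, which I expect to be the main (though minor) obstacle, is the \emph{strictness} of the inclusion $\beta < r_0$ as opposed to just $\beta \leq r_0$. Concretely, one must exclude the possibility $\beta = r_0$. This amounts to observing that $r_0$ itself cannot cover $\alpha$ inside $\Braco$ whenever the CVM hierarchy contains some intermediate region $z$ with $\alpha < z < r_0$: such a $z$ automatically lies in $\Braco$ (because $z \cap r_0 = z > \alpha$), and its existence breaks the cover. In the generic CVM setting relevant to the subsequent proof of eq.~(\ref{eq:c_Braco}), such intermediates always exist, so strictness is preserved; the degenerate case in which $r_0$ directly covers $\alpha$ in $R$ is tacitly excluded by the notation $\beta < r_0$ used in the statement.
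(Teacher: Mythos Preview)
Your argument is correct and follows the same route as the paper: use Lemma~\ref{lem:Bracoclosed} to produce $\eta = \beta \cap r_0 \in \Braco$, then invoke the covering hypothesis to force $\eta = \beta$, whence $\beta \subseteq r_0$. The paper compresses this into the single remark ``Since $\eta < r_0$, the following corollaries follow,'' so your write-up is simply a more explicit version of the same idea. Your final paragraph on strictness is a genuine addition: the paper asserts $\beta_i < r_0$ without comment, and you are right that this fails precisely when $r_0$ already covers $\alpha$ in $R$; the paper tacitly excludes this degenerate case.
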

\begin{cor}
\[B = \{ \beta_i \in \Braco | \beta_i  \mbox{ covers } \alpha \mbox{ in } \Braco\} \] is the set of minimal elements in $\Braco$.
\end{cor}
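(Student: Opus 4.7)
The plan is to unpack the definitions and show that ``covers $\alpha$ in $\Braco$'' and ``is minimal in $\Braco$'' name the same elements of $\Braco$. The only subtlety is that $\alpha$ itself does not belong to $\Braco$ (since $\alpha \cap r_0 = \alpha$, not strictly larger than $\alpha$), so the phrase ``covers $\alpha$ in $\Braco$'' must be read in the extended poset $\Braco \cup \{\alpha\}$, where $\alpha$ plays the role of an artificial bottom element sitting below every member of $\Braco$.

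First I would check that every $r \in \Braco$ strictly contains $\alpha$: by the definition of $\Braco$ we have $r \cap r_0 > \alpha$, and since $r \supseteq r \cap r_0 > \alpha$, this forces $r > \alpha$. Thus $\alpha$ is a lower bound for the whole set $\Braco$, and any chain in $\Braco$ approaching $\alpha$ necessarily does so from strictly above.

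Then I would prove the two inclusions separately. If $\beta \in B$, i.e., $\beta$ covers $\alpha$ in the extended poset $\Braco \cup \{\alpha\}$, then no $\gamma \in \Braco$ satisfies $\alpha < \gamma < \beta$; but every element of $\Braco$ already satisfies $\gamma > \alpha$, so there is no $\gamma \in \Braco$ with $\gamma < \beta$, which is exactly the statement that $\beta$ is minimal in $\Braco$. Conversely, if $\beta \in \Braco$ is minimal, then nothing in $\Braco$ sits strictly below $\beta$; combined with $\alpha < \beta$ and $\alpha \notin \Braco$, there is no element of $\Braco \cup \{\alpha\}$ strictly between $\alpha$ and $\beta$, hence $\beta$ covers $\alpha$ and $\beta \in B$.

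The main (and essentially only) obstacle is fixing the right convention for the covering relation at the missing bottom $\alpha$; once this convention is clarified the equivalence is tautological, which is presumably why the authors record the statement as a corollary of the preceding lemmas and corollary rather than as an independent theorem.
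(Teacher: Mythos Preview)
Your proposal is correct and matches what the paper intends: the corollary is stated without proof in the paper, immediately after Lemma~\ref{lem:Bracoclosed} and its first corollary, precisely because once one notes that $\alpha\notin\Braco$ while every $r\in\Braco$ satisfies $r>\alpha$, the equivalence between ``covers $\alpha$ in $\Braco\cup\{\alpha\}$'' and ``minimal in $\Braco$'' is definitional. Your careful handling of the convention for covering at the absent bottom element $\alpha$ is exactly the only point that needed spelling out.
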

\begin{lem} 
If $\gamma \in \Braco$  then $\Ac_{\gamma} \subset \Braco$. 
\end{lem}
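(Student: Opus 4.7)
The claim is essentially a monotonicity statement for the intersection operation, so the proof should be short. The plan is to take an arbitrary $\delta \in \Ac_\gamma$ and check directly that $\delta \cap r_0 > \alpha$, which is precisely the defining condition to belong to $\Braco$.

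First I would unpack the hypothesis. Since $\gamma \in \Braco$, by definition $\gamma \cap r_0 > \alpha$, meaning $\gamma \cap r_0$ is a proper superset of $\alpha$. Now let $\delta$ be an arbitrary element of $\Ac_\gamma = A_\gamma \cup \{\gamma\}$. By the definition of the (extended) ancestor set we have $\delta \supseteq \gamma$, with equality iff $\delta = \gamma$.

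Next I would apply monotonicity of the intersection: from $\delta \supseteq \gamma$ we get $\delta \cap r_0 \supseteq \gamma \cap r_0$. Chaining this with the hypothesis yields
\[
\delta \cap r_0 \;\supseteq\; \gamma \cap r_0 \;\supsetneq\; \alpha,
\]
hence $\delta \cap r_0 > \alpha$. Since also $\delta \in R$ (because $\Ac_\gamma \subset R$), the two conditions defining $\Braco$ are met, so $\delta \in \Braco$. As $\delta$ was arbitrary, $\Ac_\gamma \subset \Braco$.

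There is no real obstacle here: the argument is a one-line consequence of the fact that $X \mapsto X \cap r_0$ is order-preserving under inclusion, combined with transitivity of $\supseteq / \supsetneq$. The only point to be careful about is keeping the strict vs.\ non-strict inclusions straight in the final chain, so that one concludes $\delta \cap r_0 \supsetneq \alpha$ rather than merely $\delta \cap r_0 \supseteq \alpha$; this is where the strictness in $\gamma \cap r_0 > \alpha$ (inherited from $\gamma \in \Braco$) is used. I expect this lemma to be invoked later to analyze the poset structure of $\Braco$ and ultimately establish equation (\ref{eq:c_Braco}) by an inclusion–exclusion / Möbius-type argument on ancestors.
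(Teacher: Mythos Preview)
Your proof is correct and follows exactly the same approach as the paper's: both use monotonicity of intersection ($\delta \supseteq \gamma \Rightarrow \delta \cap r_0 \supseteq \gamma \cap r_0$) together with $\gamma \cap r_0 > \alpha$ to conclude $\delta \cap r_0 > \alpha$. Your version is simply a more explicit rendering of the paper's one-line proof $\forall_{\eta \in \Ac_\gamma}\; \eta \cap r_0 \geq \gamma \cap r_0 > \alpha \Rightarrow \eta \in \Braco$.
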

\begin{proof}
\[ \forall_{\eta \in \Ac_\gamma} \eta \cap r_0 \geq \gamma >\alpha \Rightarrow \eta \in \Braco \: \qedhere\]
\end{proof}
As a consequence the entire $\Braco$ is generated by the ancestry of members of $B$, this is:
\begin{lem}\label{lem:Braco_in_A}
\[\Braco = \mathop{\bigcup}_{\beta_i\in B} \Ac_{\beta_i} \]
\end{lem}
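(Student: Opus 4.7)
The plan is to prove the set equality by showing both inclusions, exploiting that $\Braco$ is a finite poset (Lemma \ref{lem:poset}) and that $B$ is its set of minimal elements.

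First I would handle the easy inclusion $\bigcup_{\beta_i \in B} \Ac_{\beta_i} \subset \Braco$. Every $\beta_i \in B$ belongs to $\Braco$ by definition, so by the immediately preceding lemma (\emph{if $\gamma \in \Braco$ then $\Ac_\gamma \subset \Braco$}), we get $\Ac_{\beta_i} \subset \Braco$ for each $i$, and the union remains inside $\Braco$.

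For the reverse inclusion $\Braco \subset \bigcup_{\beta_i \in B} \Ac_{\beta_i}$, I would take an arbitrary $\gamma \in \Braco$ and exhibit some $\beta_i \in B$ with $\beta_i \leq \gamma$, so that $\gamma \in \Ac_{\beta_i}$. To produce $\beta_i$, consider the subset $S_\gamma = \{ r \in \Braco \mid r \leq \gamma\}$. This set is non-empty (it contains $\gamma$) and finite, hence, being a sub-poset of the finite poset $\Braco$, it has at least one minimal element $\beta$. I claim $\beta$ is also minimal in the whole of $\Braco$: if there were $r \in \Braco$ with $r < \beta$, then by transitivity $r < \gamma$, so $r \in S_\gamma$, contradicting minimality of $\beta$ in $S_\gamma$. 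Therefore $\beta \in B$ (using the corollary identifying $B$ with the minimal elements of $\Braco$), and $\gamma \in \Ac_\beta \subset \bigcup_{\beta_i \in B} \Ac_{\beta_i}$.

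There is essentially no obstacle here: the statement is a routine order-theoretic fact once one has Lemma \ref{lem:poset} (finiteness/partial order) and the previous lemma closing $\Braco$ under taking super-regions. The only point that deserves a line of care is the verification that a minimal element of the down-set $S_\gamma$ is still minimal in the ambient $\Braco$, which follows immediately from transitivity of the partial order. No additional properties of the CVM construction beyond those already established are needed.
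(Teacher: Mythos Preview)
Your proof is correct and follows the same line as the paper, which treats the lemma as an immediate consequence of the preceding facts (closure of $\Braco$ under taking super-regions and $B$ being the set of minimal elements of the finite poset $\Braco$). Your explicit verification of both inclusions, and in particular the check that a minimal element of the down-set $S_\gamma$ is minimal in all of $\Braco$, just spells out what the paper leaves implicit.
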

We have written $\Braco$ in terms of the set of ancestors of some minimal elements $\beta_1,\beta_2\ldots$. We emphasize that 
all such ancestries share, at least, the common element $r_0$ and its ancestry. Therefore, they are not disjoint sets. 

In order to prove (\ref{eq:c_Braco}) we start from the fact that 
\[ \forall_{\beta\in R} \:\: \sum_{r\in \Ac_\beta} c_r =1 
\]
Furthermore, everytime that the intersection of ancestries is not empty, which is the case of all minimal elements $\beta_1,\beta_2\ldots$
since they all share $r_0$, there exists an element in $\gamma \in \Braco$, such that $\Ac_\gamma = \Ac_{\beta_i} \cap \Ac_{\beta_j}$. Let us formalize 
and generalize this idea.

\begin{figure*}[!htb]
 \includegraphics[width=\textwidth,height=4cm]{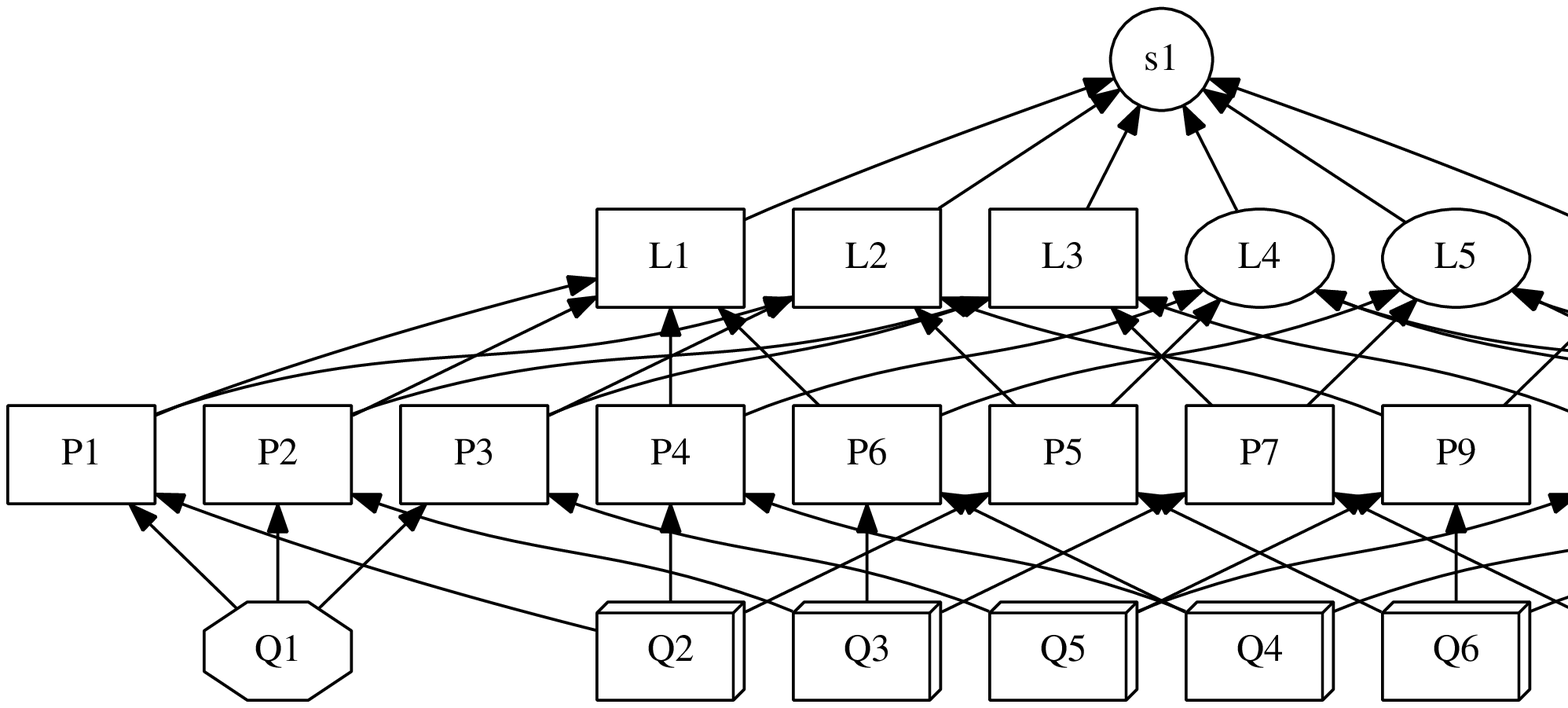}
\caption{Hasse Diagram of the regions in the 3D-cube approximation, for the 3D Ising model. Variables live in the 
nodes of a tridimensional cubic lattice. Maximal regions are the cubes (basic cell), and from them
all intersections generates new regions, as prescribed by cluster variational method. See also figure \ref{fig:cube3D}.\\
In this representation, all regions containing one central spin $s_1$ are depicted.
Central spin $s_1=\alpha$ is  surrounded by eight maximal cubic regions $Q_1,\ldots Q_8$. 
We represent the partially ordered set defined by the inclusion relations among the ancestors of $s_1$. Arrows point in the
Parent-to-Child direction.\\
Bottom figures are the eight cubes, next level are all the square plaquettes that intersect among these cubes, 
the third layer is made of the six links containing spin $s_1$. \\
We consider $Q_1 = r_0$ to be the region sending message to spin $s_1$. Shapes correspond to the position in the ancestry of $s_1$ with respect to 
the message $m_{Q_1\to s_1}$. All circular regions  represent elements of $\Bra$, and therefore their intersection with $Q_1$ is exactly $s_1$ 
(represented also in Fig. \ref{fig:cube3D_mQ1_s1}, except for $Q_1$). 
Angular regions are those in $\Braco$ (the absent part of Fig. \ref{fig:cube3D_mQ1_s1}, including $Q_1$). \label{fig:hasse}}
\end{figure*}

Let us use the definition of least upper bound $\phi = \lub(\gamma,\eta)$ as the smallest element in $\Braco$ that is both $\phi \geq \gamma$ and 
$\phi \geq \eta$. By minimum we mean that every other $z$ that shares both properties, happens to be $z>\eta$. In finite posets, the least 
upper bound might not exist, but if it does, it is uniquely defined.

We will show that the intersections of the ancestries of elements in $\Braco$, can be written 
themselves as the ancestry of another element in $\Braco$.

\begin{lem}[The intersection of ancestries]
Let ${\gamma_1,\gamma_2\in\Braco}$ with a non null intersection of their ancestors $ \Ac_{\gamma_1} \cap \Ac_{\gamma_2}  \neq\varnothing$ then, 
\begin{eqnarray*}
&& \exists  \eta = \lub(\gamma_1,\gamma_2) \in \Braco \\
\mbox{such that  } &&\: \Ac_\eta =\Ac_{\gamma_1} \cap \Ac_{\gamma_2} 
\end{eqnarray*}
\end{lem}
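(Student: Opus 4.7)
The plan is to define $\eta$ explicitly as the intersection of all common ancestors of $\gamma_1$ and $\gamma_2$, and then unfold the definitions to verify the stated properties. Concretely, I would set
\[
\eta := \bigcap_{z \in \Ac_{\gamma_1} \cap \Ac_{\gamma_2}} z.
\]
The hypothesis $\Ac_{\gamma_1} \cap \Ac_{\gamma_2} \neq \varnothing$ makes this a well-defined intersection over a non-empty finite family, and by the closure of the CVM set of regions under intersections, $\eta$ is itself an element of $R$. Every $z$ in the indexing family contains both $\gamma_1$ and $\gamma_2$ as subsets, so $\eta \supseteq \gamma_1$ and $\eta \supseteq \gamma_2$.

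Next I would check that $\eta$ deserves the name $\lub(\gamma_1,\gamma_2)$. Any $\phi \in R$ with $\phi \supseteq \gamma_1$ and $\phi \supseteq \gamma_2$ automatically belongs to $\Ac_{\gamma_1} \cap \Ac_{\gamma_2}$, hence appears as one of the factors in the definition of $\eta$, so $\phi \supseteq \eta$. To place $\eta$ inside $\Braco$, I would use that $\gamma_1 \in \Braco$ means $\gamma_1 \cap r_0 > \alpha$; since $\eta \supseteq \gamma_1$, one obtains $\eta \cap r_0 \supseteq \gamma_1 \cap r_0 > \alpha$, so $\eta \in \Braco$.

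The equality $\Ac_\eta = \Ac_{\gamma_1} \cap \Ac_{\gamma_2}$ would then drop out by two short inclusions. If $w \in \Ac_\eta$, then $w \supseteq \eta \supseteq \gamma_i$ for $i=1,2$, so $w \in \Ac_{\gamma_1} \cap \Ac_{\gamma_2}$. Conversely, every $w \in \Ac_{\gamma_1} \cap \Ac_{\gamma_2}$ is one of the terms in the intersection defining $\eta$, hence $w \supseteq \eta$, i.e.\ $w \in \Ac_\eta$. I do not foresee a real obstacle here: the whole content of the lemma is the closure of $R$ under intersections, which upgrades the naive set-theoretic "smallest superset of $\gamma_1$ and $\gamma_2$" to a genuine region of the CVM construction; the remaining steps are purely definitional.
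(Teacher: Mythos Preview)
Your proof is correct and follows essentially the same idea as the paper's: both use that $R$ is closed under intersections to produce a single region $\eta$ whose extended ancestry is exactly $\Ac_{\gamma_1}\cap\Ac_{\gamma_2}$. The only cosmetic difference is that the paper builds $\eta$ by iteratively intersecting pairs of common ancestors until a single minimal element remains, whereas you take the intersection of the whole family in one step; you also spell out the membership $\eta\in\Braco$ (via $\eta\cap r_0\supseteq\gamma_1\cap r_0>\alpha$), which the paper leaves implicit.
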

\begin{proof}
If the intersection of ancestries is not null, then it has at least one element, let us say $\theta$. Since $\gamma_1,\gamma_2 \leq \theta$, 
then all ancestors of $\theta$ are also comparable and above $\gamma_1$ and $\gamma_2$. In other words, 
$\theta \in \Ac_{\gamma_1} \cap \Ac_{\gamma_2}   \Rightarrow \Ac_\theta \subset  \Ac_{\gamma_1} \cap \Ac_{\gamma_2}$. Let us suppose 
that $ \Ac_{\gamma_1} \cap \Ac_{\gamma_2} $ is the union of more than one ancestries of many incomparable  $\theta_i$'s. This cannot 
be the case, since the intersection of any two $\theta_1$ and $\theta_2$ produces a lower $\theta$, that is again in 
$ \Ac_{\gamma_1} \cap \Ac_{\gamma_2} $ and whose ancestry includes that of both $\theta_1$ and $\theta_2$. Repeating this procedure, 
we end up with a unique value $\eta$, such that $ \Ac_\eta = \Ac_{\gamma_1} \cap \Ac_{\gamma_2}  \: \qedhere$. The fact that 
$\eta= \lub(\gamma_1,\gamma_2)$ is trivial.
\end{proof}

Since the set $\Braco$ is written as the union of ancestries of the minimal sets $\beta_1,\beta_2,\ldots$ (see lemma \ref{lem:Braco_in_A}), then from 
the previous lemma and the fact that all ancestries of the minimal elements share at least $r_0$ and its ancestry, it follows that
\begin{cor}
 Let $\Pi$ be any subset of the indices of the minimal sets $\beta_1,\beta_2,\ldots$ covering  $\alpha$ inside $\Braco$, then
 \[\exists_{r\in \Braco} \:\:\bigcap_{i\in\Pi} \Ac_{\beta_i} = \Ac_{r}\]
\end{cor}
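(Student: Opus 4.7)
The plan is to prove this corollary by a short induction on $|\Pi|$, iterating the preceding lemma (which handles the $|\Pi|=2$ case). The key preliminary observation is that every minimal element $\beta_i\in B$ satisfies $\beta_i\subset r_0$, so $r_0\in\Ac_{\beta_i}$ for every $i$; in particular any intersection of the form $\bigcap_{i\in\Pi}\Ac_{\beta_i}$ always contains $r_0$ and is therefore nonempty. This is the ingredient that keeps the preceding lemma applicable at every stage, and together with the fact that $\Braco$ is closed under least upper bounds inside $\Ac_\alpha$, it ensures that the element produced at each step lies in $\Braco$.

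The base case $|\Pi|=1$ is immediate: $\bigcap_{i\in\{k\}}\Ac_{\beta_k}=\Ac_{\beta_k}$ and $\beta_k\in B\subset\Braco$, so $r:=\beta_k$ works. For the inductive step, assume the statement for every subset of size $n$ and let $\Pi$ have size $n+1$. Write $\Pi=\Pi'\cup\{j\}$ with $|\Pi'|=n$; by the inductive hypothesis there is some $r'\in\Braco$ with $\bigcap_{i\in\Pi'}\Ac_{\beta_i}=\Ac_{r'}$. Hence
$$\bigcap_{i\in\Pi}\Ac_{\beta_i}=\Ac_{r'}\cap\Ac_{\beta_j},$$
and by the preliminary observation this intersection contains $r_0$, so it is nonempty. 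Applying the preceding lemma to $r'$ and $\beta_j$ yields an $r=\lub(r',\beta_j)\in\Braco$ with $\Ac_r=\Ac_{r'}\cap\Ac_{\beta_j}$, closing the induction.

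There is essentially no obstacle here: the only thing to double-check is that each iterated least upper bound remains inside $\Braco$, and this is precisely what the preceding lemma guarantees, provided the intersection is nonempty; the common presence of $r_0$ (and of its own ancestry) in every $\Ac_{\beta_i}$ takes care of the nonemptiness at every step. Everything else is bookkeeping on the finite poset structure established in Lemmas \ref{lem:poset} and \ref{lem:Bracoclosed}.
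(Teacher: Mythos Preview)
Your proof is correct and follows the same line the paper indicates: the paper does not spell out the argument but simply remarks that the corollary follows from the preceding lemma together with the fact that all the ancestries $\Ac_{\beta_i}$ share at least $r_0$ (and its ancestry), which is exactly the nonemptiness ingredient you use to iterate the lemma. Your induction on $|\Pi|$ is the natural way to make that hint precise, and the observation that each $r'$ produced along the way stays in $\Braco$ is justified by the lemma itself.
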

 
Now, using the inclusion exclusion principle, we can write 
 \begin{eqnarray*}
 \sum_{r\in \Braco} c_r& =& \sum_{\beta\in B} \sum_{\gamma \in \Ac_\beta } c_\gamma - \\
 && - \sum_{\beta_1,\beta_2 \in B} \sum_{\gamma \in \Ac_{\beta_1}\cap \Ac_{\beta_2} } c_\gamma + \\
 && + \sum_{\beta_1,\beta_2, \beta_3 \in B} \sum_{\gamma \in \Ac_{\beta_1}\cap \Ac_{\beta_2}\cap \Ac_{\beta_3}} c_\gamma - \\
 && \ldots
\end{eqnarray*}
where every second sumatory is itself of the form
\[\sum_{r \in \Ac_\gamma } c_r = 1\]
by the previous corollary. So, if the set $\Braco$ is generated by the ancestries of $K$ minimal elements $\beta_i$, then
 \begin{eqnarray*}
 \sum_{r\in \Braco} c_r& =& \binom{K}{1}-\binom{K}{2} + \binom{K}{3} - \\
 && \ldots +(-1)^{K+1} \binom{K}{K}\\
 & =& 1+(1-1)^K = 1
\end{eqnarray*}
which concludes the proof of (\ref{eq:c_Braco}) and from it that of (\ref{eq:c_Bro}).$\blacksquare$
 
To help visualize a little bit the rather complicated algebra of sets, we depict in Fig. \ref{fig:hasse} the Hasse diagram and the sets $\Braco$ and $\Bra$ 
for the Cube-CVM approximation for the 3D square lattice spin model. In particular the case of the message going from the cubic region $Q_1$ to the 
central spin region $s_1$ is shown.

\section{Marginalization of beliefs}
\label{ap:beliefs_rotation}
Extremization of the variational free energy with respect to the message functions result in the following equation
\begin{equation}
 \sum_{r\in\partial m_{r_0\to\gamma} } c_r\sum_{\underline{x}_{r}\setminus \underline{x}_{\gamma}} b_r(\xr) = 0 
 \label{eq:c_r_b_r_eq_zero}
 \end{equation}
for each message $m_{r_0\to\gamma}$.

In this appendix we show that the only solution for such set of equations are beliefs satisfying
\begin{equation} \label{eq:consistency_app}
\displaystyle \forall_{p \in A_r}  \: b_r(\xr) = \sum_{\underline{x}_{p}\setminus \xr}  b_{p}. (\underline{x}_{p}) 
\end{equation}

We will prove this in an inductive manner, assuming that starting from the maximal regions onto a certain level, all regions marginalize
and from this assumption, we will show that the next level also correctly marginalize.

{ \bf Induction: base case }

Let $r_0$ be a region whose ancestry $\Arc$ consists only on maximal regions. 
Then the intersection of members of any two members of $\Arc$ cannot be smaller than $r_0$ since by definition $r\in \Arc \Rightarrow r_0\subset r$, 
but the intersection cannot be larger than $r_0$, since in such case $\gamma = r_1 \cap r_2 > r_0$ would be an ancestor of $r_0$ that is not 
a maximal region. So, any two elements of $\Arc$ intersect exactly at $r_0$. 

Given the definition of the set of regions in whose belief a given message is present
\[ \partial m_{p\to r} = \{r' | r' \cap p = r\}
\] 
the previous result implies that any message $m_{r_1\to r_0}(\xr_0)$ from a parent $r_1$ of $r_0$ appears in the belief 
of all the other parents except $r_1$ itself and including $r_0$. Mathematically:
\begin{eqnarray*} 
\forall_{r\in \Arc}  \partial m_{r\to r_0} &=& \{r_0\} \cup \Arc \setminus \{r\} \\ 
 &=&  \Acrc \setminus \{r\} \\ 
\end{eqnarray*}
Let there be $|\Arc| = K$ parents to $r_0$. We have $K$ equations of the type (\ref{eq:c_r_b_r_eq_zero})
\[
\displaystyle \forall_{i\in[1,2,\ldots,K]}  \sum_{r\in \Acrc\setminus \{r_i\} } c_r\sum_{\underline{x}_{r}\setminus \underline{x}_0} b_r(\xr) = 0 
 \]
 Except for $c_{r_0} =1-K$, all other counting numbers are $c_r=1$. Furthermore, we can add the missing sumand in each case, to obtain, for each $i\in[1,2,\ldots K]$
\[
 (1-K) b_0(\underline{x}_0) + \sum_{r\in \Arc}\sum_{\underline{x}_{r}\setminus \underline{x}_0} b_r(\xr) =  \sum_{\underline{x}_{i}\setminus \underline{x}_0} b_{r_i}(\underline{x}_i)
 \]
 Now, the left hand side is independent of $i$, and therefore all righthand sides have to be equal for every $i$. This proves that the parents are consistent
 among each other on their belief at region $r_0$. To show that they agree with $b_0(\underline{x}_0)$, we now use the fact that they agree to write
\[
 (1-K) b_0(\underline{x}_0) + K \sum_{\underline{x}_{r_i}\setminus \underline{x}_0} b_{r_i}(\underline{x}_i) =  \sum_{\underline{x}_{i}\setminus \underline{x}_0} b_{r_i}(\underline{x}_i)
 \]
 which concludes the induction base case with
\[
\displaystyle  \forall_{i\in[1,\ldots,K]} \:\: b_0(\underline{x}_0) =  \sum_{\underline{x}_{i}\setminus \underline{x}_0} b_{r_i}(\underline{x}_i).
 \]
 as desired.
 
{\bf Induction: inductive step}
 
Focus on a given region $r_0$, and consider its ancestry $\Arc$. In a partial order, the ancestry of a given element
is always generated by the union of ancestries of all elements covering it (see the previous appendix for the definition of the cover). Let there be $K$ such elements $r_i$ covering $r_0$, then
\[ \Arc = \bigcup_{1\leq i\leq K} A^o_{r_i}
\]
The induction step assumes that all ancestors $p\in A_{r_i}$ of $r_i$  are consistent with $r_i$, in the sense of (\ref{eq:consistency_app}), 
and will then prove that
$r_i$ has also to be consistent with $r_0$. Since consistency between any $p\in A_{r_i}$ and $r_i$ and between $r_i$ and $r_0$ is given, transitivity implies
consistency between  $r_0$ and $p$, and generalizing, with all the ancestry $\Arc$ of $r_0$, concluding the induction step.

\begin{figure*}[!htb]
  \includegraphics[width=0.65\textwidth]{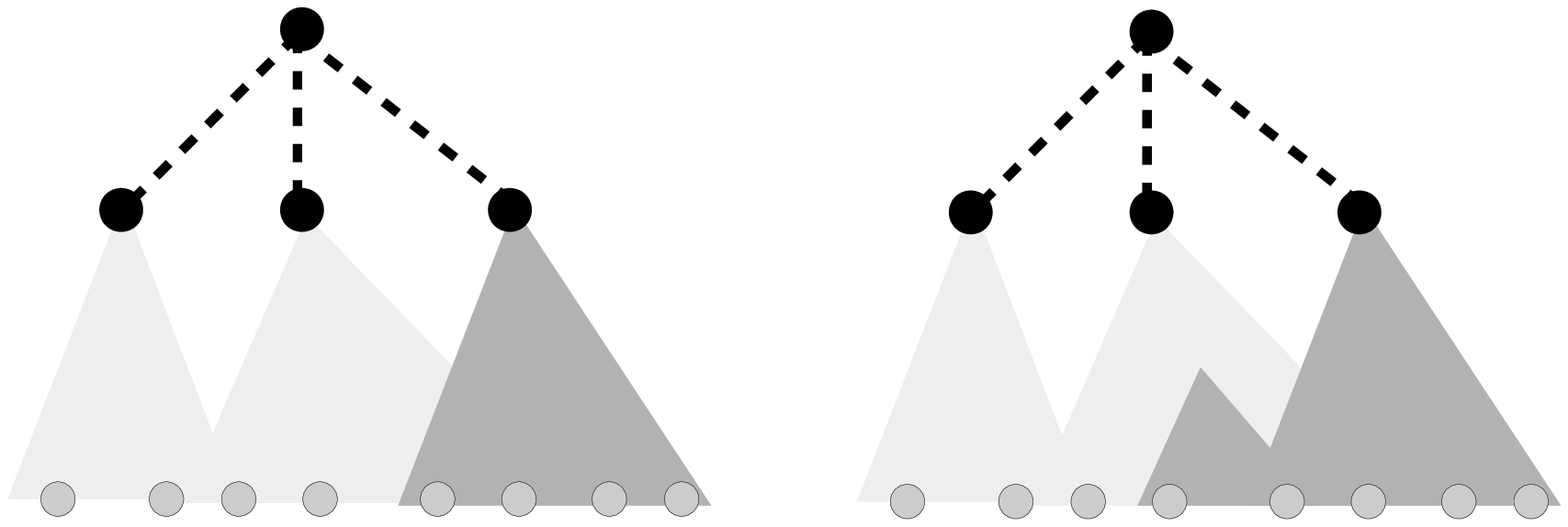}
\caption{
\label{fig:ancestries}}
\end{figure*}

The tricky part is to show that consistency of the cover elements $r_i$ with their ancestry $A_{r_i}$ implies consistency of $r_i$ with $r_0$. 
In order to do so let us start by the following 
\begin{lem}[Intersection of ancestries]
 If the set $\{r_1,\ldots ,r_K\}$ covers the element $r_0$, then for any two distincts $k_1$ and $k_2$ in $[1,\ldots,K]$,  
 if there exists $p\in A_{r_{k_1}}$ such that $\gamma = p\cap r_{k_2} > r_0$, then $\gamma = r_{k_2}$. 
\end{lem}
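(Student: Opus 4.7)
The plan is to unpack the definition of ``cover'' in the partially ordered set $R$: an element $r_{k_2}$ covers $r_0$ iff $r_0 < r_{k_2}$ and no region in $R$ lies strictly between them. So once I show that $\gamma$ is a bona fide element of $R$ satisfying $r_0 \leq \gamma \leq r_{k_2}$, the cover property will force $\gamma \in \{r_0,r_{k_2}\}$, and the hypothesis $\gamma > r_0$ will pin down $\gamma = r_{k_2}$.

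First I would check that $\gamma = p \cap r_{k_2}$ is actually in $R$: since $R$ is closed under intersections (a property used repeatedly in the previous appendix), and both $p, r_{k_2}\in R$, we have $\gamma\in R$. Next I would verify the two inclusions. The upper bound $\gamma \leq r_{k_2}$ is immediate from $\gamma\subseteq r_{k_2}$. For the lower bound $r_0 \leq \gamma$, note that $p \in A_{r_{k_1}}$ means $p \supseteq r_{k_1}$, and since $r_{k_1}$ covers $r_0$ we have $r_{k_1} \supset r_0$, so $p \supseteq r_0$; combined with $r_{k_2}\supset r_0$ this gives $r_0 \subseteq p\cap r_{k_2} = \gamma$.

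With $r_0 \leq \gamma \leq r_{k_2}$ established, I invoke the covering relation: if $r_0 < \gamma < r_{k_2}$ held, $\gamma$ would be a strict intermediate region between $r_0$ and its cover $r_{k_2}$, contradicting the fact that $r_{k_2}$ covers $r_0$. Hence $\gamma = r_0$ or $\gamma = r_{k_2}$, and since the hypothesis rules out the first option, we conclude $\gamma = r_{k_2}$, as desired.

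There is no real obstacle: the proof is essentially the observation that an intersection is automatically sandwiched between the two endpoints, and the cover property forbids anything strictly in between. The only point that requires a moment's care is justifying $r_0 \subseteq p$, which follows from transitivity of inclusion along $r_0 \subset r_{k_1} \subseteq p$, and explicitly citing closure of $R$ under intersection so that $\gamma$ can legitimately participate in the cover condition.
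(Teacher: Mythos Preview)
Your argument is correct and is essentially the paper's own proof, only spelled out in more detail: you sandwich $\gamma$ between $r_0$ and $r_{k_2}$ and invoke the cover property. Your explicit check that $\gamma\in R$ via closure under intersection is a nice touch the paper leaves implicit; note, however, that your separate verification of $r_0\leq\gamma$ is redundant since the hypothesis already gives $\gamma>r_0$.
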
 
In other words, any element $p\in \Arc$ such that the intersection $p\cap r_i$ with one of the covers $r_i$ is larger than $r_0$, is itself an ancerstor of that cover $p \in A^o_{r_i}$.
\begin{proof}
 It is enough to note that $\gamma = p\cap r_{k_2} > r_0$ is nessarily bounded $r_0< \gamma \leq r_{k_2}$, but since $r_{k_2}$ is a cover of $r_0$, no such intermediate element can exists, and therefore the only accepted situation is $\gamma = r_{k_2}$. But this also implies  that $r_{k_2} = p\cap r_{k_2}$
 which means that $p$ is ancestor of $k_2$.
\end{proof}

\begin{cor}
 The set $\partial m_{r_i\to r_0} = \{p\in \Acrc | p\cap r_i = r_0\}$ is given by
 \[\partial m_{r_i\to r_0}  = \Acrc \setminus A^o_{r_i}\]
\end{cor}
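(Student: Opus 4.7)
The plan is to establish the set equality by proving both inclusions, relying on the preceding Lemma together with the identity $\Arc = \bigcup_{i=1}^K \Ac_{r_i}$ noted just above the corollary (which tells us that every element of $\Arc$ is an ancestor of at least one cover $r_j$ of $r_0$).

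For the forward inclusion $\partial m_{r_i\to r_0} \subseteq \Acrc \setminus \Ac_{r_i}$, I would argue by contraposition. If $p \in \Ac_{r_i}$, then $r_i \subseteq p$, and consequently $p \cap r_i = r_i$. Since $r_i$ is a cover of $r_0$ we have $r_i \supsetneq r_0$, so $p \cap r_i \neq r_0$ and $p$ is not in $\partial m_{r_i\to r_0}$.

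For the reverse inclusion, take $p \in \Acrc \setminus \Ac_{r_i}$. The case $p = r_0$ is immediate, since then $p \cap r_i = r_0 \cap r_i = r_0$. Otherwise $p \in \Arc$, and the union-of-ancestries identity gives $p \in \Ac_{r_j}$ for some $j$; the hypothesis $p \notin \Ac_{r_i}$ forces $j \neq i$. Since $p$ and $r_i$ both contain $r_0$, we have $p \cap r_i \supseteq r_0$. If this inclusion were strict, the preceding Lemma (applied with $k_1 = j$ and $k_2 = i$) would force $p \cap r_i = r_{k_2} = r_i$, placing $p$ in $\Ac_{r_i}$ and contradicting the hypothesis. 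Hence $p \cap r_i = r_0$, i.e. $p \in \partial m_{r_i\to r_0}$.

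The corollary is essentially a bookkeeping consequence of the Lemma; the only step that requires care is the invocation of the Lemma in the second inclusion, where one must ensure that $p$ is actually an ancestor of some cover $r_j$ so that the Lemma's hypothesis is satisfied. The union-of-ancestries identity makes this a single line, so I do not anticipate any substantial obstacle.
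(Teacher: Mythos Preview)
Your proof is correct and follows the same line as the paper, which treats the corollary as an immediate consequence of the preceding lemma and its paraphrase (``any element $p\in \Arc$ such that the intersection $p\cap r_i$ with one of the covers $r_i$ is larger than $r_0$, is itself an ancestor of that cover $p \in A^o_{r_i}$''). Your detour through the union-of-ancestries identity to place $p$ in some $A_{r_{k_1}}$ is unnecessary: the lemma's proof (and the paper's paraphrase) already applies to every $p\in \Arc$, so you can invoke it directly once you know $p\cap r_i > r_0$, without first identifying a specific $k_1$.
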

Graphically this means that the only possible situation for the sets $\partial m_{r_i\to r_0}$ is the one in the left panel of figure \ref{fig:ancestries} and the situation in the right is forbiden.

Now, for every cover region $r_i, i\in[1,\dots,K]$ we will have an equation (\ref{eq:c_r_b_r_eq_zero}), that using the previous corollary can be written as
\begin{equation}
 \sum_{r\in \Acrc} c_r\sum_{\underline{x}_{r} \setminus \underline{x}_0} b_r(\xr) =  \sum_{r\in A^o_{r_i}} c_r \sum_{\underline{x}_{r}\setminus \underline{x}_0} b_{r}(\underline{x}_r) \label{eq:sum_cr_consistency}
 \end{equation}
 
Now it is quite similar to the base case of the induction. The left hand side does not depends on $i\in[1,\ldots,K]$, while the right hand does. Therefore any two $i_1,i_2 \in [1,\ldots,K]$ will be consistent
\[\sum_{r\in A^o_{r_{i_1}}} c_r \sum_{\underline{x}_{r}\setminus \underline{x}_0} b_{r}(\underline{x}_r) =\sum_{r\in A^o_{r_{i_2}}} c_r \sum_{\underline{x}_{r}\setminus \underline{x}_0} b_{r}(\underline{x}_r)\]
Furthermore, using that $c_{r_i} = 1- \sum_{r\in A_{r_i}}c_r$ and the consistency of $r_i$ with its ancestry, the previous equality can be transformed in
\[\sum_{\underline{x}_{r_{i_1}}\setminus \underline{x}_0} b_{r_{i_1}}(\underline{x}_{r_{i_1}}) =\sum_{\underline{x}_{r_{i_2}}\setminus \underline{x}_0} b_{r_{i_2}}(\underline{x}_{r_{i_2}}) 
\]
Proceeding in a similar fashion as done in the base case for the induction, we can use the consistency between all different $r_i$ back in equation (\ref{eq:sum_cr_consistency}) to show that they also have to agree with $r_0$
\[b_{r_0}(\underline{x}_{r_0}) =\sum_{\underline{x}_{r_{i}}\setminus \underline{x}_0} b_{r_{i}}(\underline{x}_{r_{i}}) 
\]
concluding the inductive step. $\blacksquare$.

\section{Moment matching fields are gauge free}
\label{ap:mmmmgauge}
Let us prove theorems \ref{theo:mmmmconsistent} and \ref{theo:mmmmgauge}. They both say that using Maximal Messages with Moment Matching fields guarantees
the consistency of beliefs, and is gauge free. We have already shown in the previous appendix \ref{ap:mmmmvalid} that extremization of the CVM free energy
with respect to message functions, ensures consistency of the beliefs. However, when using Moment Matching fields, we reduce the degrees of freedom of the
message functions, and therefore it is not clear that consistency still holds. 

The reduction on the active fields was done seeking a gauge free parametrization of the variational free energy. 
So we would also like to prove that the solution to the extremization problem gives a unique solution to the fields defining the
messages. In other words, that when doing moment matching, we have removed all fields except those necessary to guarantee the consistency between the beliefs.

The proof of the consistency of the beliefs (theorem \ref{theo:mmmmconsistent}) will be carried in the following way:
\begin{enumerate}
 \item Show that extremization with respect to moment matching fields on a set of variables $\prod_{i\in q} s_i$ ensure consistency of the 
 corresponding moments $\langle \prod_{i\in q}s_i\rangle$ among those beliefs containing that group of variables.
 \item Show that all moments are fixed by some field.
 \item Conclude by saying that if all moments are equal, distributions have to be consistent.
\end{enumerate}

The proof of the gauge free character (theorem  \ref{theo:mmmmgauge}) will be carried out simply by showing that, after
removing of the undesired fields, there are as many variables (fields) as equations to be solved.

\subsection{Moment consistency (theorem \ref{theo:mmmmconsistent})}

The prescription given by the moment matching definition \ref{def:moment_matching} says that message $m_{p\to r}(\sr) $ 
counts with a field $U_{q}$ forcing the correlation among variables in $q\subseteq r$ as  
\[m_{p\to r}(\sr)  = e^{\cdots+U^{p\to r}_q \prod_{i\in q} s_i + \cdots}
\] 
if and only if $r$ is the smallest region containing the subset $q$. Therefore, for any given set $q=\{s_{q_1},\ldots,s_{q_k}\}$ there is a single region
$r_0$ such that the fields $U^{p\to r_0}_q$ appear in the message from its ancestors $p\in A_{r_0}$. The situation is depicted in figure \ref{fig:mmmm_diagram}.

\begin{figure*}[!htb]
  \includegraphics[width=0.65\textwidth]{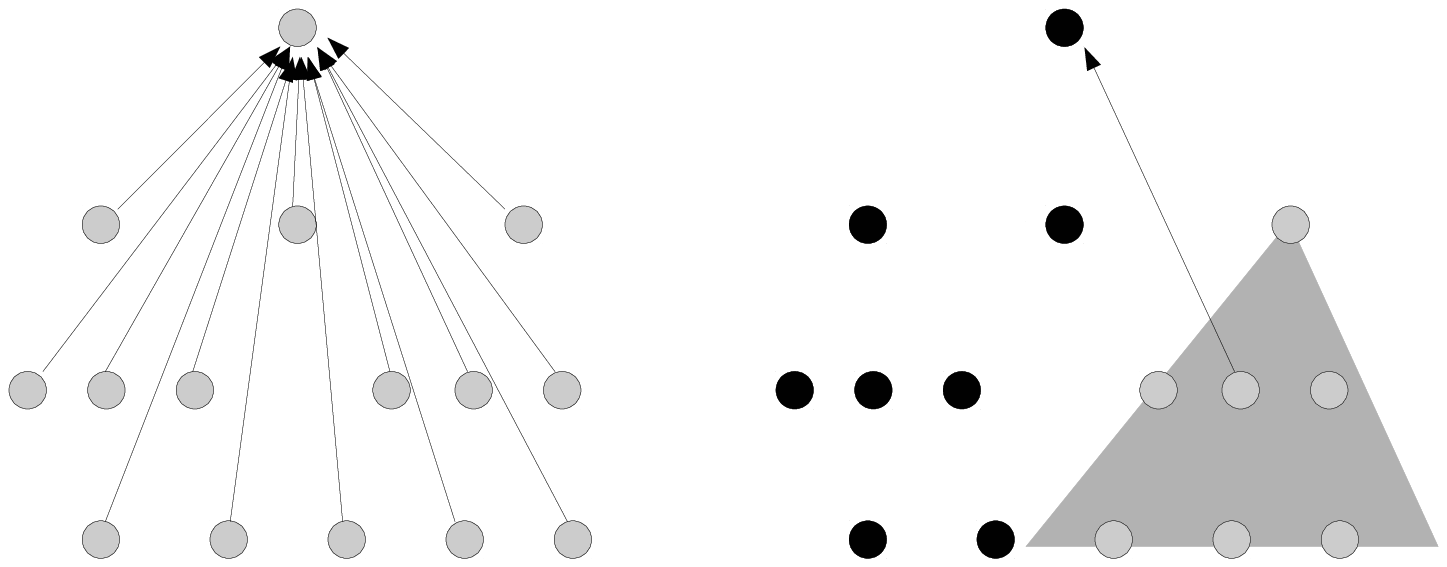}
\caption{
\label{fig:mmmm_diagram}}
\end{figure*}

When extremizing with respect to the fields $U$ that parametrize the messages we get equations similar to (\ref{eq:c_r_b_r_eq_zero}), but
instead of the belief functions, we get the moments corresponding to the field
\begin{eqnarray}
\forall_{p\in \Acrc} \sum_{r\in\partial U^{p\to r}_q } c_r \: \xi_{q,r}\: &=& 0 \label{eq:sum_cr_corr}\\
 \mbox{with}\phantom{xx}  \xi_{q,r} &=& \left\langle \prod_{i\in q}x_i \right\rangle_{b_r} \nonumber
 \end{eqnarray}
 and the expected value is taken with respect of the belief $b_r(\xr)$ (local distribution) at region $r$ and 
 $\partial U^{p\to r}_q = \partial m_{p\to r}$ is the set of regions in whose beliefs the message $m_{p\to r}$ participates, and therefore 
so does $U^{p\to r}_q$. 

We will have as many equations (\ref{eq:sum_cr_corr}) as ancestors $r_0$ have $K=|A_{r_0}|$, corresponding to derivation with respect to each 
field (each arrow in figure \ref{fig:mmmm_diagram}). We will further assume that no counting number is zero, and therefore the set of linear 
equations (\ref{eq:sum_cr_corr}) relates all moments $\xi_{q,r}$ which are $K+1$, including the one obtained at $r_0$ itself.

Since we have proven in Appendix \ref{ap:mmmmvalid} that 
\begin{eqnarray}
 \sum_{r\in\partial U^{p\to r}_q } c_r &=& 0 
\end{eqnarray}
a particular solution of the system of equations will always be 
\[ \forall_{r\in \Arc} \:\: \xi_{q,r} = \xi_{q,r_0}
\]
which is part of what we are trying to prove. We can also write equation (\ref{eq:sum_cr_corr}) as
\[\forall_{p\in \Arc} \sum_{r\in\partial U^{p\to r}_q \setminus r_0 } c_r \: \xi_{q,r}\: = - c_{r_0} \xi_{q,r_0}
\]
where the right hand side is the same irrespective of who is $p$. Considering only the correlations involved in the left hand side,
this system of equation will have only one solution (at fixed $\xi_{q,r_0}$) if the matrix $G_{K\times K}$ made of elements
\[ g_{r,p} =\left\{ \begin{array}{cc}
   c_{r} & \mbox{ if }r\in\partial U^{p\to r_0}_q \\
   0 & \mbox{ otherwise }
  \end{array}\right.
\]
has non zero determinant.

In order not to make the paper far too long, we ask the reader to prove in each cluster variational method implemented that the 
set of regions used fulfill this property. Yet, we warn that this property wont be fulfilled any time that some of the following conditions
is present:
\begin{itemize}
 \item There are zero counting numbers, since a column of the matrix will be full of zeros.
 \item A given message does not appear in any belief equation, since a row will be full of zeros.
 \item Two or more rows of the matrix are equal, causing a zero determinant.
\end{itemize}
Without a proof (that seems rather complicated to us), we give the hint that these seems to be the only situations possible, after many random playing 
with arbitrary clusters approximations. It is not obvious why two or more lines could not be linearly combined into another line, to cause a zero 
determinant, but some properties of the counting numbers seems to forbid this.

So, we have that under the condition of non-singular matrix $G_{K\times K}$ the set of equations involving the correlation of a 
given set of spins $q$, force all such correlations to be equal. Now since every set of spins contained in two or more regions is contained in their
intersection (which also has to be a region by CVM prescription), then all two regions agree on the moments of every common subset of variables.

We finish the proof by noting that if all regions agree on all correlations of the intersecting variables, they have to be consistent in the sense that the 
marginal probabilities over these variables should agree.

\subsection{Gauge free (theorem \ref{theo:mmmmgauge})}

We note from the previous proof that we have $K$ fields $U^{p\to r_0}_q$ if there are $K$ ancestor of region $r_0$. Since the consistency of the 
corresponding correlations $\xi_{q,p}$ is fixed by $K$ equations, we have as many parameters as equations to be satisfied. Furthermore, the consistency
among the local distributions $b_r(\xr)$ that contain a given set of variables $q$, can not be forced with less than $K$ equalities, since equalities are 
transitive and therefore all we need is to conect the set of $K+1$ regions containing $q$ in a graph with minimal number of edges (each edge meaning an 
equality) among moments. Among $K+1$ nodes, the single component graph with minimal edges is the tree, which happens to have $K$ edges. So, as we said,
the $K$ fields $U^{p\to r_0}_q$ are exactly the minimun required amount to enforce all local distribution to agree on the respective moment  $\xi_{q,p}$.

This used not to be the case in Parent-to-Child CVM, for instance, as seen in  \cite{GBPGF}, where the consistency of a single spin magnetization
that belonged to four links and four plaquettes in the square plaquette Ising model, appear after the derivation with respect to 
twelve parameters (field) instead of eight. Therefore there are 12 equations (and 12 parameters) to asign the equality among 8 local distributions,
forcing 4 of the equations (and parameters) to be redundant.

So, in this appendix we have proven that maximal messages and moment matching fields generate a set of equations with the following properties:
\begin{enumerate}
 \item every correlation among a set of variables that belongs to two or more regions is present in some equation;
 \item there are as many free parameters as relations required to guarantee consistency;
 \item consistent correlations is one solution of the system.
\end{enumerate}

%\section{Details of stability calculations}
%\label{ap:stabilityK}
 
%In order for this being true, we need to show that the message functions introduced cancel out in
%the partition function, or more precisely that equation \ref{eq:mc_r} is satisfied. 

\end{document}